\newcommand{\mydriver}{hypertex}
 \renewcommand{\mydriver}{pdftex}
\newcommand{\R}{\circledR}
\newcommand{\re}{\mathbb{R}}
\newcommand{\vol}{\textrm{vol}}
\newcommand{\Tr}{\textrm{Tr}}
\newcommand{\LL}{{\bf \mathcal{L}}}
\newcommand{\HH}{{\bf \mathcal{H}}}
\newcommand{\supp}{\textrm{supp}}
\newcommand{\p}{\textbf{p}}
\newcommand{\pp}{\tilde{\textbf{p}}}
\newcommand{\q}{\textbf{q}}
\newcommand{\x}{\textbf{x}}
\newcommand{\y}{\textbf{y}}
\newcommand{\vv}{\textbf{v}}
\newcommand{\1}{\textbf{1}}
\newcommand{\reg}{\textrm{reg}}
\newtheorem{lemma}{Lemma}
\newtheorem{theorem}{Theorem}
\newtheorem{corollary}{Corollary}
\newtheorem{claim}[theorem]{Claim}
\theoremstyle{definition}
\newtheorem{definition}{Definition}
 \providecommand{\norm}[1]{\lVert#1\rVert}
\begin{document}
\title{Testing Small Set Expansion in General Graphs\thanks{Both authors are partially supported by the Grand Project ``Network Algorithms and Digital Information'' of the
Institute of software, Chinese Academy of Sciences and by a National Basic Research Program (973) entitled computational Theory on Big Data of Cyberspace, grant No. 2014CB340302. The second author acknowledges the support of ERC grant No. 307696.}}
\author{Angsheng Li\footnote{State Key Laboratory of
Computer Science, Institute of Software, Chinese Academy of Sciences}\\ angsheng@ios.ac.cn
\and Pan Peng\footnote{Department of Computer Science,
Technische Universit{\"a}t Dortmund; State Key Laboratory of
Computer Science, Institute of Software, Chinese Academy of Sciences}\\
pan.peng@tu-dortmund.de}

\date{}
\maketitle

\begin{abstract}
We consider the problem of testing small set expansion for general graphs. A graph $G$ is a $(k,\phi)$-expander if every subset of volume at most $k$ has conductance at least $\phi$. Small set expansion has recently received significant attention due to its close connection to the unique games conjecture, the local graph partitioning algorithms and locally testable codes.

We give testers with two-sided error and one-sided error in the \textit{adjacency list} model that allows degree and neighbor queries to the oracle of the input graph. The testers take as input an $n$-vertex graph $G$, a volume bound $k$, an expansion bound $\phi$ and a distance parameter $\varepsilon>0$. For the two-sided error tester, with probability at least $2/3$, it accepts the graph if it is a $(k,\phi)$-expander and rejects the graph if it is $\varepsilon$-far from any $(k^*,\phi^*)$-expander, where $k^*=\Theta(k\varepsilon)$ and $\phi^*=\Theta(\frac{\phi^4}{\min\{\log(4m/k),\log n\}\cdot(\ln k)})$. The query complexity and running time of the tester are $\widetilde{O}(\sqrt{m}\phi^{-4}\varepsilon^{-2})$, where $m$ is the number of edges of the graph. For the one-sided error tester, it accepts every $(k,\phi)$-expander, and with probability at least $2/3$, rejects every graph that is $\varepsilon$-far from $(k^*,\phi^*)$-expander, where $k^*=O(k^{1-\xi})$ and $\phi^*=O(\xi\phi^2)$ for any $0<\xi<1$. The query complexity and running time of this tester are $\widetilde{O}(\sqrt{\frac{n}{\varepsilon^3}}+\frac{k}{\varepsilon \phi^4})$.

We also give a two-sided error tester in the \textit{rotation map} model that allows \textit{(neighbor, index)} queries and degree queries. This tester has asymptotically almost the same query complexity and running time as the two-sided error tester in the adjacency list model, but has a better performance: it can distinguish any $(k,\phi)$-expander from graphs that are $\varepsilon$-far from $(k^*,\phi^*)$-expanders, where $k^*=\Theta(k\varepsilon)$ and $\phi^*=\Theta(\frac{\phi^2}{\min\{\log(4m/k),\log n\}\cdot(\ln k)})$.

In our analysis, we introduce a new graph product called \textit{non-uniform replacement product} that transforms a general graph into a bounded degree graph, and approximately preserves the expansion profile as well as the corresponding spectral property.
\end{abstract}

\section{Introduction}
Graph property testing is an effective algorithmic paradigm to deal with real-world networks, the scale of which has become so large that it is even impractical to read the whole input. In the setting of testing a graph property $P$, we are given as input a graph $G$ and we want to design an algorithm (called \textit{tester}) to distinguish the case that $G$ has property $P$ from the case that $G$ is ``far from'' the property $P$ with high success probability (say $2/3$). Here, the notion of being ``far from'' is parameterized by a distance parameter $\varepsilon$. In most situations, a graph $G$ is said to be $\varepsilon$-far from property $P$ if one has to modify at least an $\varepsilon$ fraction of the representation (or edges) of $G$ to obtain a graph $G'$ with property $P$. We assume the input graph $G$ can be accessed through an oracle $\mathcal{O}_G$ and the goal is to design property testers that make as few queries as possible to $\mathcal{O}_G$.

Since the seminal work of Goldreich and Ron~\cite{GR02:testing}, many testers have been developed for different graph properties, such as $k$-colorability, bipartiteness, acyclicity, triangle-freeness and many others. Most of these testers apply only to the adjacency matrix model or the adjacency list model, depending on the types of queries the testers are allowed to ask the oracle. The former model is most suitable for \textit{dense} graphs and general characterizations on the testability of a property in this model has been given (e.g., \cite{AFNS09:test}). The latter model is most suitable for \textit{sparse} graphs, and several property testers using the techniques of local search or random walks are known, while it is not well understood what properties are testable in constant time in this model. Even less is known about testers, testability results or even models for general graphs~(see recent surveys~\cite{Ron10:testing,Gol10:test}).

In this paper, we focus on property testers for general graphs. We will consider the \textit{adjacency list} model that allows \textit{degree} queries and \textit{neighbor} queries to the oracle of the graph~\cite{PR02:diameter}. For the degree query, when specified a vertex $v$, the oracle returns the degree of $v$; for the neighbor query, when specified a vertex $v$ and an index $i$; the oracle returns the $i$th neighbor of $v$. The adjacency list model also applies to the bounded degree graphs with an additional restriction that a fixed upper-bound was assumed on the degrees~\cite{GR02:testing}. We will also consider a new model which we call \textit{rotation map} model that allows degree queries and \textit{(neighbor, index)} queries to the oracle~\cite{LPP11:conductance}. For the (neighbor, index) query, when specified a vertex $v$ and an index $i$, the oracle returns a pair $(u,j)$ such that $u$ is the $i$th neighbor of $v$ and $j$ is the index of $u$ as a neighbor of $v$. Note that the rotation map model is at least as strong as the adjacency list model.

We study the problem of testing \textit{small set expansion} for general graphs. Given a graph $G=(V,E)$ with $n$ vertices and $m$ edges, and a set $S\subseteq V$, let the \textit{volume} of $S$ be the sum of degree of vertices in $S$, that is, $\vol(S):=\sum_{v\in S} \deg_G(v),$ where $\deg_G(v)$ denotes the degree of vertex $v$.
Define the \textit{conductance} of $S$ as
$\phi(S):=\frac{e(S,V\backslash S)}{\vol(S)},$ where $e(S,V\backslash S)$ is the number of edges leaving $S$; and define the \textit{$k$-expansion profile} of $G$ as $\phi(k):=\min_{S:\vol(S)\leq k}\phi(S)$.
A graph $G$ is called a $(k,\phi)$-expander if $\phi(k)\geq \phi$, that is, all the subsets in $G$ with volume at most $k$ have conductance at least $\phi$. We will refer to small set expander as $(k,\phi)$-expander and refer to small set expansion as $\phi(k)$.

Besides of the relation to the mixing time of random walks~\cite{LK99:mixing}, small set expansion has been of much interest recently for its close connection to the unique games conjecture~\cite{RS10:expansion,ABS10:subexp}, the design of local graph partitioning algorithms in massive graphs~\cite{ST08:local,ACL06:localpr,AP09:evolvingset,OT12:clustering,KL12:sparse}, and locally testable codes that are testable with linear number of queries~\cite{BGHMRS12:longcode}. Approximation algorithms and spectral characterizations for the small set expansion problem have been studied~\cite{ABS10:subexp,LRTV12:sparse,LOT12:spectral,KL12:sparse,OT12:clustering,OW12:sse}. It is natural to ask if one can efficiently (in sublinear time) test if a graph is a small set expander.


\subsection{Our results}
We give testers for small set expansion in the adjacency list model as well as the rotation map model for general graphs. We use the common definition of distance between graphs. More precisely, a graph $G$ with $m$ edges is said to be $\varepsilon$-far from a $(k,\phi)$-expander if one has to modify at least $\varepsilon m$ edges of $G$ so that it becomes a $(k,\phi)$-expander. We will assume throughout the paper that $m=\Omega(n)$ (and a brief discussion is given in Section~\ref{sec:prelim}), while the algorithm is not given as input the number of edges $m$.

\subsubsection{Testers in adjacency list model}
Our first result is a property tester for small set expansion with two-sided error in the adjacency list model.

\begin{theorem}~\label{thm:twoside-list}
Given degree and neighbor query access to an $n$-vertex graph, a volume bound $k$, a distance parameter $\varepsilon$ and a conductance bound $\phi$, there exists an algorithm that with probability at least $2/3$, accepts any graph that is a $(k,\phi)$-expander, and rejects any graph that is $\varepsilon$-far from any $(k^*,\phi^*)$-expander, where $k^*=\Theta(k\varepsilon)$ and $\phi^*=\Theta(\frac{\phi^4}{\min\{\log(4m/k),\log n\}\cdot(\ln k)})$, where $m$ is the number of edges of $G$. The query complexity and running time of the algorithm are $\widetilde{O}(\sqrt{m}\phi^{-4}\varepsilon^{-2})$.
\end{theorem}

Note that the running time of the tester matches the best known algorithms for testing the conductance of $G$ which corresponds to the case $k=m$ (see further discussions below).

As a byproduct of our analysis for the above two-sided error tester, we obtain a one-sided error tester (that accepts every $(k,\phi)$-expander) by invoking a local algorithm for finding small sparse cuts. We show the following result.

\begin{theorem}~\label{thm:oneside}
Given degree and neighbor query access to an $n$-vertex graph, a volume bound $k$, a conductance bound $\phi$, and a distance parameter $\varepsilon$, there exists an  algorithm that always accepts any graph that is a $(k,\phi)$-expander, and with probability at least $2/3$ rejects any graph that is $\varepsilon$-far from any $(k^*,\phi^*)$-expander, where $k^*=O(k^{1-\xi})$ and $\phi^*=O(\xi\phi^2)$ for any $0<\xi<1$. Furthermore, whenever it rejects a graph, it provides a \textit{certificate} that the graph is not a $(k,\phi)$-expander in the form of a set of volume at most $k$ and expansion at most $\phi$. The query complexity and running time of the algorithm are $\widetilde{O}(\sqrt{\frac{n}{\varepsilon^3}}+\frac{k}{\varepsilon \phi^4})$.
\end{theorem}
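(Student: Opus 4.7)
\bigskip
\noindent\textbf{Proof plan.} The tester will sample $\Theta(1/\varepsilon)$ vertices, each with probability proportional to its degree, and from every sampled vertex $v$ invoke a \emph{local small sparse cut algorithm} (a suitable instantiation of the local partitioning routines that appear in the analysis of Theorem~\ref{thm:twoside}) with per-invocation running time $\widetilde{O}(k/\phi^4)$. Whenever any such invocation returns a set $S$ of volume at most $k$ and conductance at most $\phi$, the tester outputs $S$ as its certificate and rejects; otherwise it accepts. The total time is $\widetilde{O}(k/(\varepsilon \phi^4))$ as claimed. Completeness is immediate: a $(k,\phi)$-expander contains no set of volume $\leq k$ with conductance $\leq \phi$, so no certificate is ever produced, which simultaneously establishes the one-sided error and the certificate-on-rejection property.

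Soundness will rest on two ingredients. The \emph{algorithmic ingredient} is the standard guarantee of the local routine: for any $0<\xi<1$, if the starting vertex $v$ lies in some set $S^*$ with $\vol(S^*) \leq k^*=O(k^{1-\xi})$ and $\phi(S^*) \leq \phi^*=O(\xi\phi^2)$, then with at least constant probability the routine returns a set $S \ni v$ with $\vol(S) \leq k$ and $\phi(S) \leq \phi$. This is exactly what produces the tradeoff between $(k^*,\phi^*)$ and $(k,\phi)$ in the statement, and it matches the known complexity bounds. The \emph{structural ingredient} is a lemma asserting that if $G$ is $\varepsilon$-far from every $(k^*,\phi^*)$-expander, then the degree-weighted mass of the ``witness set'' $W$ of vertices that do lie in some $S^*$ with $\vol(S^*) \leq k^*$ and $\phi(S^*) \leq \phi^*$ is at least $\Omega(\varepsilon m)$. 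Combining the two, $\Theta(1/\varepsilon)$ degree-weighted samples hit $W$ with probability at least $4/5$, and then an invocation of the local algorithm produces a valid certificate with constant probability, yielding overall success probability $\geq 2/3$.

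The main obstacle is the structural lemma. My plan is a greedy peeling argument: while the residual graph contains a set of volume $\leq k^*$ and conductance $\leq \phi^*$, pick one and remove it. Either the total removed volume reaches $c\varepsilon m$, in which case every removed vertex belongs to $W$ in the original graph and we are done, or no such set remains, in which case the residual is by definition a $(k^*,\phi^*)$-expander. In the latter case I would argue that the removed pieces can be reattached to the residual expander by modifying at most $O(\varepsilon m)$ edges while preserving the $(k^*,\phi^*)$-expansion, thereby exhibiting a $(k^*,\phi^*)$-expander within edit distance $\varepsilon m$ of $G$ and contradicting the $\varepsilon$-far hypothesis. Ensuring that the reattachment does not create any new small sparse cuts is the subtle step, and it is what forces the quantitative loss from $(k,\phi)$ to $(k^*,\phi^*)$; the parameter $\xi$ controls how aggressive the peeling can be while still permitting a clean reattachment.
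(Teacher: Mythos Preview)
Your plan is essentially the paper's proof: sample $\Theta(1/\varepsilon)$ degree-weighted vertices, run \texttt{LocalSS} from each, reject if a sparse set of volume $\le k$ and conductance $\le\phi$ is ever found; the structural ingredient is Lemma~\ref{lem:partition}, proved by exactly the greedy peeling you describe, with the reattachment step supplied by Corollary~\ref{cor:patchup} (imported from \cite{LPP11:conductance}). Two points deserve correction.

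First, your last sentence misattributes the role of $\xi$. The reattachment in Corollary~\ref{cor:patchup} involves only absolute constants and does not touch $\xi$; the entire $(k,\phi)\to(k^*,\phi^*)$ loss comes from the local algorithm (Lemma~\ref{lem:local}), where a seed in a set of volume $\le 2k^*$ and conductance $O(\phi^*)$ yields output of volume $O((k^*)^{1+\zeta})$ and conductance $O(\sqrt{\phi^*/\zeta})$, and setting $\zeta=\xi/2$, $k^*=\Theta(k^{1-\xi})$, $\phi^*=\Theta(\xi\phi^2)$ forces the output below $(k,\phi)$. You in fact said this correctly in your ``algorithmic ingredient'' paragraph, so the final paragraph is just internally inconsistent. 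Relatedly, the guarantee of Lemma~\ref{lem:local} is not ``constant probability from any $v\in S^*$'' but rather a deterministic success from every $v$ in a subset $\widehat{S^*}\subseteq S^*$ carrying at least half the volume; this is what you actually need for the sampling bound.

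Second, the peeling has a genuine subtlety you glossed over: the sets $A_i$ are sparse in the residual graph $G[V_{i-1}]$, not a priori in $G$, so the claim ``every removed vertex belongs to $W$ in the original graph'' needs justification. The paper handles this by calling an index $i$ \emph{good} when $\vol_{V_{i-1}}(A_i)\ge(1-10c_2\phi^*)\vol(A_i)$; for good $i$ one checks directly that $\phi_G(A_i)\le 11c_2\phi^*$ and $\vol_G(A_i)\le 2k^*$, and a simple counting shows the bad indices together carry at most a fifth of the removed volume, so the good $A_i$'s still cover $\Omega(\varepsilon m)$. Add this dichotomy and your argument is complete.
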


Note that $\xi$ is not necessarily a constant, and the running time of the above algorithm is sublinear in $m$ for $k=O(\frac{m}{\log^{\Omega(1)}n})$ and constant $\phi$.

\subsubsection{Tester in rotation map model}
We also give a two-sided error tester in the rotation map model. Note that the gap of the conductance value in completeness and soundness here is smaller than the corresponding gap in the tester in adjacency list model.
\begin{theorem}~\label{thm:twoside-map}
Given degree and (neighbor, index) query access to an $n$-vertex graph, a volume bound $k$, a distance parameter $\varepsilon$ and a conductance bound $\phi$, there exists an algorithm that with probability at least $2/3$, accepts any graph that is a $(k,\phi)$-expander, and rejects any graph that is $\varepsilon$-far from any $(k^*,\phi^*)$-expander, where $k^*=\Theta(k\varepsilon)$ and $\phi^*=\Theta(\frac{\phi^2}{\min\{\log(4m/k),\log n\}\cdot(\ln k)})$, where $m$ is the number of edges of $G$. The query complexity and running time of the algorithm are $\widetilde{O}(\sqrt{m}\phi^{-2}\varepsilon^{-2})$.
\end{theorem}

\subsubsection{Graph transformation}
The analysis of the above two-sided error tester involves analyzing random walks on a bounded degree graph by the spectral property of small set expander and a new graph product which we call \textit{non-uniform replacement product} that transforms every graph (with possible multiple edges and self-loops) into a bounded degree graph, and in the process, the expansion profile of the resulting graph does not differ by much from that of the original graph. This transformation may be of independent interest, and we present the formal result below. Let $\LL_G$ be the normalized Laplacian matrix of a graph $G$ and let $\lambda_i(G)$ denote the $i$th smallest eigenvalues of $\LL_G$.

\begin{theorem}~\label{thm:spectra}
Let $\phi<1$ and $k\leq m$. For any graph $G=(V,E)$ with $n$ vertices and $m$ edges, there exists a $16$-regular graph $G'$ with $\Theta(m)$ vertices such that
\begin{enumerate}
\item If $S\subseteq V(G)$ is a subset in $G$ with $\phi_G(S)\leq \phi$, then there exists a set $S'\subseteq V(G')$, such that $|S'|=\Theta(\vol_G(S))$ and $\phi_{G'}(S')\leq \phi/16$;
\item If for any set $S\subset V(G)$ with $\vol_G(S)\leq k$, $\phi_G(S)\geq \phi$, then
\begin{enumerate}
\item for any $S'\subseteq V(G')$ with $|S'|\leq \Theta(k)$, $\phi_{G'}(S')= \Omega(\phi^2)$.
\item for any $\alpha>0$, it holds that $\lambda_{\frac{(1+\alpha)2m}{k}}(G') = \Omega(\alpha^6 \phi^2 (\log\frac{2m}{k})^{-1})$, and $\lambda_{(\frac{2m}{k})^{1+\alpha}}(G') = \Omega(\alpha\phi^2\log_n\frac{2m}{k})$. Furthermore, if $k=m$, then $\lambda_2(G')=\Omega(\phi^2)$.
\end{enumerate}
\end{enumerate}
\end{theorem}

Note that by recent spectral characterization of small set expansion of $G$ and the preconditions of the Item 2 of Theorem~\ref{thm:spectra}, we have $\lambda_{\frac{(1+\alpha)2m}{k}}(G) = \Omega(\alpha^6\phi^2(\log\frac{2m}{k})^{-1})$, $\lambda_{(\frac{2m}{k})^{1+\alpha}}(G) = \Omega(\alpha\phi^2\log_n(2m/k))$, and if $k=m$, $\lambda_2(G)=\Omega(\phi^2)$ (see Section~\ref{subsec:spectra}). Also we stress that Item 2b above is not a direct consequence of Item 2a and inequalities in Section~\ref{subsec:spectra}, and its proof involves a more refined spectral analysis. The main point from $G$ to $G'$ is that the property of small set expansion is well preserved and the maximum degree is also greatly reduced, which is comparable to work on constructions from high degree expanders to constant degree expanders~(see eg.,\cite{Rei08:connectivity,ASS08:expander}). 

\subsection{Other related work}
There is an interesting line of research on testing the special case of the $(k,\phi)$-expander for $k=m$, which is often abbreviated as \textit{$\phi$-expander}. The corresponding quantity $\phi(m)$ is often called the \textit{expansion (or conductance)} of $G$~\cite{HLW06:expander}. Goldreich and Ron~\cite{GR00:expansion} have proposed an expansion tester for bounded degree graphs in the adjacency list model. The tester (with different setting parameters) has later been analyzed by Czumaj and Sohler~\cite{CS10:expansion}, Nachmias and Shapira~\cite{NS10:expansion}, and Kale and Seshadhri~\cite{KS11:tester}, and it is proven that the tester can distinguish $d$-regular $\phi$-expanders from graphs that are $\varepsilon$-far from any $d$-regular $\Omega(\eta\phi^2)$-expanders for any $\eta>0$. The query complexity and running time of the tester are $O(\frac{n^{0.5+\eta}}{\phi^2}(\varepsilon^{-1}\log n)^{O(1)})$, which is almost optimal by a lower bound of $\Omega(\sqrt{n})$ given by Goldreich and Ron~\cite{GR02:testing}. Li, Pan and Peng~\cite{LPP11:conductance} give an expansion tester in the rotation map model with query complexity and running time $\widetilde{O}(\frac{m^{1/2+\eta}}{\phi^2}(\varepsilon^{-1}\log n)^{O(1)})$ for general graphs that matches the best known tester for bounded degree graphs. We remark that when $k=m$, our two-sided tester in the rotation map model can be also guaranteed to test the conductance $\phi(m)$ of $G$ with the same running time and approximation performance. In~\cite{LPP11:conductance}, a product called \textit{non-uniform zig-zag product} was proposed to transfer an arbitrary graph into a bounded degree graph. However, the analysis there is more involved and does not seem to generalize to the $k$-expansion profile for any $k\leq m$ as considered here. Our analysis here is both simple and applicable to the broader case.

The techniques of random walks have also been used to test bipartiteness under different models~\cite{GR99:bipartite,KKR04:bipartite,CMOS11:planar}. In particular, Kaufman et al. extend the bipartiteness tester in bounded degree graphs to general graphs~\cite{KKR04:bipartite} and they also used the idea of replacing high degree vertices by expander graphs. Furthermore, we will also use their techniques for emulating random walks (by performing queries to the oracle of the original graph) and sampling vertices almost uniformly in the transformed graph. However, the transformed graph in~\cite{KKR04:bipartite} may still have large maximum degree (that may be twice the average degree of the original graph), which is not applicable to our case. Ben-Eliezer et al. studied the strength of different query types in the context of property testing in general graphs~\cite{BKKR08:query}. The analysis for the expansion of the replacement product (and the zig-zag product) of two regular graphs are introduced in~\cite{RVW02:zigzag,Rei08:connectivity,RV05:derandomized,RTV06:pseudorandom}.

\subsection{Organization of the paper}
The rest of the paper is organized as follows. In Section~\ref{sec:prelim} we give some basic definitions and introduce the tools for our analysis. Then we introduce the non-uniform replacement product and show its property in Section~\ref{sec:product}. In Section~\ref{sec:tester}, we give all our testers and prove the performance of these testers. Finally, we give a short conclusion in Section~\ref{sec:conclusion}. 

\section{Preliminaries}~\label{sec:prelim}
Let $G=(V,E)$ be an undirected and simple graph with $|V|=n$ and $|E|=m$. Let $\deg_G(v)$ denote the degree of a vertex $v$. As mentioned in the introduction, we consider the \textit{adjacency list} model and the \textit{rotation map} model. In the adjacency list model, the graph is represented by its adjacency list, which is also accessible through an oracle access $\mathcal{O}_G$, and the algorithm is allowed to perform degree and neighbor queries to $\mathcal{O}_G$. In the rotation map model, the graph is represented by its rotation map that for each vertex $u$ and an index $i\leq \deg_G(u)$, in the $(u,i)$th location of the representation the pair $(v,j)$ is stored such that $v$ is the $i$th neighbor of $u$ and $u$ is the $j$th neighbor of $v$. We are given an oracle access $\mathcal{O}_G$ to the rotation map of $G$ and allowed to perform degree queries and (neighbor, index) queries to $\mathcal{O}_G$. We remark that the rotation map model is at least as strong as the adjacency list model. For a graph with maximum degree bounded by $d$, we assume that $d$ is a constant independent of $n$. 

For a vertex subset $S\subseteq V$, let $e_G(S,V\backslash S)$ be the number of edges leaving $S$. Let $\vol_G(S):=\sum_{v\in S}\deg_G(v)$ and $\phi_G(S):=e_G(S,\bar{S})/\vol_G(S)$ be the \textit{volume} and the \textit{conductance} of $S$ in $G$, respectively. Note that $\vol_G(G):=\vol_G(V)=2|E|$. In the following, when it is clear from context, we will omit the subscript $G$. Define the \textit{$k$-expansion profile} of $G$ as $\phi(k):=\min_{S:\vol(S)\leq k}\phi(S)$. In particular, $\phi(m)$ is often referred to the \textit{conductance (or expansion) of $G$} and we let $\phi(G):=\phi(m)$. A graph is called a $\phi$-expander if $\phi(G)\geq \phi$.
\begin{definition}
A graph $G$ is a $(k,\phi)$-expander if $\phi(k)\geq \phi$. Equivalently, $G$ is a $(k,\phi)$-expander if for every $S\subseteq V$ with volume $\vol(S)\leq k$ has conductance $\phi(S)\geq \phi$.
\end{definition}
We have the following definition of graphs that are $\varepsilon$-far from $(k,\phi)$-expanders.
\begin{definition}
A graph $G$ is $\varepsilon$-far from any $(k,\phi)$-expander if one has to modify at least $\varepsilon m$ edges of $G$ to obtain a $(k,\phi)$-expander.
\end{definition}

As mentioned before, we will assume that $m=\Omega(n)$, as otherwise, there exists $n-o(n)$ isolated vertices in $G$, and the graph cannot be a $(k,\phi)$-expander even for constant $k$ and any $\phi>0$. Furthermore, since we will only sample a constant number of vertices (as we do in all our testers), then with high probability, the sampled vertices are all isolated, and in this case, we can safely reject the graph.


We will use bold letters to denote row vectors. For any vector $\p\in\re^V$, let $\p(S):=\sum_{v\in S}\p(v)$ and let $\norm{\p}_1=\sum_{v\in V}|\p(v)|,\norm{\p}_2=\sqrt{\sum_{v\in V}\p(v)^2}$ denote the $l_1,l_2$-norm of $\p$, respectively. Let $\supp(\p)$ be the support of $\p$. Let $\1_S$ be the characteristic vector of $S$, that is, $\1_S(v)=1$ if $v\in S$ and $\1_S(v)=0$ otherwise. Let $\1_v:=\1_{\{v\}}$.

\subsection{Lazy random walks}
We now introduce some tools that will be used in the design and analysis of our algorithms. The following also applies to graphs with possible multiple edges and/or self-loops. First, we define the \textit{lazy random walks} on $G$. In a lazy random walk, if we are currently at vertex $v$, then in the next step, we choose a random neighbor $u$ with probability $1/2\deg(v)$ and move to $u$. With the remaining probability $1/2$, we stay at $v$.

For a given graph $G$, let $A$ denote its adjacency matrix and let $D$ denote the diagonal matrix such that $D_{u,u}=\deg(u)$ for any $u$. Let $I$ denote the identity matrix. Then $W:=(I+D^{-1}A)/2$ is the probability transition matrix of the lazy random walk of $G$. Note that if $\p_0$ is a probability distribution on $V$, then $\p_0W^t$ denotes distribution of the endpoint of a length $t$ lazy random walk with initial distribution $\p_0$. In particular, we let $\p_v^t=\1_vW^t$ be the probability distribution of the endpoint of a walk of length $t$ starting from vertex $v$. Furthermore, we let $\norm{\p_v^t}_2^2$ denote the \textit{collision probability} of such a walk.

For any lazy random walk matrix $W=\frac{I+D^{-1}A}{2}$, 
it is well known that all its eigenvalues are real~(see eg. \cite{OT12:clustering}). Furthermore, if we let $\eta_1(W)\geq\cdots\geq\eta_n(W)$ denote the eigenvalues of $W$, then $0\leq\eta_i(W)\leq 1$ for any $i\leq n$.

\subsection{Spectral characterization of expansion profile}~\label{subsec:spectra}
For a graph $G$, let $\LL:=I-D^{-1/2}AD^{-1/2}$ be the normalized Laplacian matrix of $G$. Let $0=\lambda_1\leq\lambda_2\leq\cdots\leq \lambda_n\leq 2$ be eigenvalues of $\LL$. It is straightforward to verify that $\eta_i=1-\frac{\lambda_i}{2}$ for any $1\leq i\leq n$, where $\eta_i$ is the $i$th largest eigenvalue of the lazy random walk matrix $W$ of $G$. We have the following lemmas relating the expansion profile and the eigenvalues of $\LL$.
\begin{lemma}[Cheeger inequality,~\cite{AM85:lambda,Alo86:eigenvalues,SJ89:approximate}]\label{lem:cheeger}
For every graph $G$, we have $\frac{\lambda_2}{2}\leq\phi(G)\leq\sqrt{2\lambda_2}$.
\end{lemma}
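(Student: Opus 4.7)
The plan is to prove the two inequalities separately. Both hinge on the variational characterization $\lambda_2=\min_{g\perp D^{1/2}\mathbf{1}}\frac{g^\top \LL g}{g^\top g}$, which after the substitution $g=D^{1/2}f$ becomes
\[
\lambda_2 \;=\; \min_{f:\,\sum_v \deg(v) f(v)=0}\frac{\sum_{\{u,v\}\in E}(f(u)-f(v))^2}{\sum_v \deg(v) f(v)^2}.
\]
I will use this identity throughout.

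For the easy direction $\lambda_2/2\le \phi(G)$, I would fix a set $S$ with $\vol(S)\le \vol(V)/2$ achieving $\phi(S)=\phi(G)$, and plug in the canonical test vector
\[
f(v)=\begin{cases} 1/\vol(S) & v\in S,\\ -1/\vol(V\setminus S) & v\notin S.\end{cases}
\]
This vector is orthogonal to $D\mathbf{1}$, so it is a valid trial for the Rayleigh quotient. The numerator equals $e(S,\bar S)\bigl(1/\vol(S)+1/\vol(\bar S)\bigr)^2$ and the denominator equals $1/\vol(S)+1/\vol(\bar S)$, so the quotient simplifies to $e(S,\bar S)(1/\vol(S)+1/\vol(\bar S))\le 2\phi(S)$, yielding $\lambda_2\le 2\phi(G)$.

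The hard direction $\phi(G)\le \sqrt{2\lambda_2}$ is the main obstacle, and I would follow the classical level-set (``coarea'') argument. First, take $g$ to be a $\lambda_2$-eigenvector of $\LL$ and set $f=D^{-1/2}g$, so that the Rayleigh quotient of $f$ equals $\lambda_2$. Replacing $f$ by either its positive or negative part (whichever has smaller support volume) gives a nonnegative vector $f$, supported on a set of volume at most $\vol(V)/2$, whose quotient still satisfies $\sum_{\{u,v\}\in E}(f(u)-f(v))^2\le \lambda_2\sum_v \deg(v) f(v)^2$; this step requires checking that truncating to the positive part can only decrease the numerator and that the orthogonality to $D\mathbf{1}$ was only used to get this inequality. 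Next, introduce $f^2$ and apply Cauchy--Schwarz to the edge sum:
\[
\Bigl(\sum_{\{u,v\}\in E}|f(u)^2-f(v)^2|\Bigr)^2 \;\le\;\sum_{\{u,v\}\in E}(f(u)-f(v))^2\cdot \sum_{\{u,v\}\in E}(f(u)+f(v))^2,
\]
and bound $\sum_{\{u,v\}\in E}(f(u)+f(v))^2\le 2\sum_v \deg(v) f(v)^2$.

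Finally, I would order the vertices so that $f(v_1)\ge f(v_2)\ge\cdots$, set $S_i=\{v_1,\ldots,v_i\}$, and use the coarea identity
\[
\sum_{\{u,v\}\in E}|f(u)^2-f(v)^2| \;=\; \int_0^\infty e(S_t,\bar S_t)\,dt,\qquad \sum_v \deg(v) f(v)^2 \;=\; \int_0^\infty \vol(S_t)\,dt,
\]
where $S_t=\{v:f(v)^2>t\}$. Since every such $S_t$ has volume at most $\vol(V)/2$, we have $e(S_t,\bar S_t)\ge \phi^\ast\vol(S_t)$ where $\phi^\ast=\min_i \phi(S_i)\ge \phi(G)$. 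Integrating and combining with the Cauchy--Schwarz bound gives $\lambda_2\ge (\phi^\ast)^2/2$, hence $\phi(G)\le \phi^\ast\le \sqrt{2\lambda_2}$. The main technical care needed is in the truncation to the positive part (to control the support volume while preserving the Rayleigh-quotient inequality) and in justifying the coarea identity discretely by expanding $|f(u)^2-f(v)^2|$ as a telescoping sum across the intermediate level gaps.
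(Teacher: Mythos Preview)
Your argument is the standard proof of Cheeger's inequality and is correct as sketched; the one genuinely delicate step you flag---that passing from the eigenvector $f$ to its positive part $f_+$ preserves the inequality $\sum_{\{u,v\}}(f_+(u)-f_+(v))^2\le \lambda_2\sum_v\deg(v)f_+(v)^2$---goes through because, writing $f=f_+-f_-$ and using the eigenvector equation $(D-A)f=\lambda_2 Df$, one has $f_+^\top(D-A)f_+\le f_+^\top(D-A)f=\lambda_2 f_+^\top Df_+$ since $f_+^\top A f_-\ge 0$ and $f_+^\top D f_-=0$.

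As for comparison with the paper: there is nothing to compare. The paper does not prove Lemma~\ref{lem:cheeger}; it simply states Cheeger's inequality as a classical result and cites \cite{AM85:lambda,Alo86:eigenvalues,SJ89:approximate}. Your write-up is exactly the textbook proof those references contain, so in effect you have reproduced what the paper defers to the literature.
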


\begin{lemma}[\cite{LOT12:spectral,LRTV12:sparse}]~\label{lem:eigenA}
For every graph $G$, $h\in \mathbb{N}$ and any $\alpha>0$, we have $\phi(\frac{(1+\alpha)2m}{h})\leq O(\frac{1}{\alpha^3}\sqrt{\lambda_h\log h})$.
\end{lemma}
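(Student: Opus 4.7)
The statement is a higher-order small-set Cheeger-type inequality, of the same flavor as the ones proved in [LOT12] and [LRTV12], so my plan is to follow the spectral-embedding-plus-random-partitioning route that those papers use, while keeping track of the $(1+\alpha)$ slack in the volume bound.

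First, I would set up the spectral embedding. Let $f_1,\dots,f_h$ be orthonormal eigenvectors of the normalized Laplacian $\LL$ corresponding to $\lambda_1\le \dots\le \lambda_h$, and define the map $F:V\to\re^h$ by $F(v)=\frac{1}{\sqrt{\deg(v)}}(f_1(v),\dots,f_h(v))$. The key quantitative property of this embedding is that, weighted by degrees, it is approximately isotropic ($\sum_v \deg(v) F(v) F(v)^\top = I_h$), while its ``Dirichlet energy'' along each coordinate is controlled: $\sum_{(u,v)\in E} \|F(u)-F(v)\|^2 \le \sum_{i\le h} \lambda_i \le h\lambda_h$. These two facts are the only things one needs about $F$.

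Next, I would run a random partitioning scheme on $\re^h$. The cleanest version is the ``isotropic localization + random Gaussian hash'' of [LRTV12]: one picks a random direction $g\sim\mathcal N(0,I_h)$, and for each vertex $v$ defines a score $\langle g, F(v)/\|F(v)\|\rangle$ (after a smooth ``localization'' reweighting that concentrates mass where $\|F(v)\|$ is large). This produces $\Omega(h)$ vertex-disjoint sets $S_1,\dots,S_t$, each with vanishing conductance bound. Standard computations — linearity of expectation for $e(S_i,\bar{S_i})$ controlled by $\sqrt{\lambda_h\log h}\cdot\vol(S_i)$, and an isoperimetric argument for the volumes — show that one can arrange $\sum_i \vol(S_i)\ge 2m$ (up to constants) and simultaneously $\phi(S_i)\le O(\sqrt{\lambda_h\log h})$ for every $i$ that survives the scheme. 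The $\alpha^{-3}$ arises here, from the truncation step that is needed to turn the ``on average small'' volume guarantee into a pointwise one: relaxing the target volume from $2m/h$ to $(1+\alpha)2m/h$ allows one to discard at most an $\alpha$-fraction of the sets by a Markov-type argument, and losing factors of $\alpha$ in the isoperimetric inequality compounds to roughly $\alpha^{-3}$.

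Finally, pigeonhole finishes the job. Since the $S_i$ are disjoint and their total volume is at least $2m/(1+\alpha/2)$ while there are $\Omega(h)$ of them, at least one has $\vol(S_i)\le \frac{(1+\alpha)\,2m}{h}$, and for that set $\phi(S_i) = O(\alpha^{-3}\sqrt{\lambda_h\log h})$, which is exactly the claimed bound on $\phi\bigl(\frac{(1+\alpha)2m}{h}\bigr)$. The main obstacle, as usual in these higher-order Cheeger arguments, is the partitioning step: one must show simultaneously that almost all vertices get assigned to \emph{some} block, that the blocks are small in volume, and that the random cut crosses few edges; the dependence on $\alpha$ has to be tracked carefully through this truncation in order not to blow up beyond $\alpha^{-3}$. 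Everything else — the setup of $F$, the eigenvalue energy bound, and the final pigeonhole — is essentially mechanical.
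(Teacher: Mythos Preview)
The paper does not give its own proof of this lemma; it is stated purely as a citation to \cite{LOT12:spectral,LRTV12:sparse}, with only the remark that those references phrase the expansion profile in terms of vertex count rather than volume but that the proofs go through unchanged. So there is nothing in the paper to compare against beyond the pointer to the literature.

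Your outline is a faithful high-level summary of exactly the argument in those cited works: the degree-normalized spectral embedding $F$, the isotropy and energy identities $\sum_v \deg(v)\,F(v)F(v)^\top=I_h$ and $\sum_{(u,v)\in E}\|F(u)-F(v)\|^2\le h\lambda_h$, a random partition of the embedded points into $\Omega(h)$ pieces whose boundaries are controlled by $O(\sqrt{\lambda_h\log h})$, and a pigeonhole/averaging step to extract one piece of volume at most $(1+\alpha)2m/h$. The only place your sketch is loose is the bookkeeping for the $\alpha^{-3}$: in the cited papers the polynomial loss in $\alpha$ (or the analogous slack parameter) comes from producing $(1-\delta)h$ disjoint sets with conductance $O(\mathrm{poly}(1/\delta)\sqrt{\lambda_h\log h})$ and then choosing $\delta$ as a function of $\alpha$, rather than from a single Markov-truncation as you describe; but this is a matter of presentation, not a gap in the argument.
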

\begin{lemma}[\cite{Ste10:phd,OT12:clustering,OW12:sse}]~\label{lem:eigenB}
For every graph $G$, $h\in \mathbb{N}$ and any $\alpha>0$, we have
$\phi(\frac{2m}{h^{1-\alpha}})\leq O(\sqrt{(\lambda_h/\alpha)\log_h n})$.
\end{lemma}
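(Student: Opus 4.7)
Plan: the statement is a small-set variant of the higher-order Cheeger inequality, and I would follow the spectral-embedding-plus-random-partition strategy used in the cited works of Steurer, Oveis Gharan--Trevisan, and O'Donnell--Witmer. First I would build a spectral embedding from the bottom $h$ eigenvectors, then carve the graph into $\approx h^{1-\alpha}$ localized pieces by a random radial partition of that embedding, and finally apply Cheeger rounding to the resulting localized test functions.

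Concretely, let $f_1,\ldots,f_h$ be orthonormal eigenvectors of $\LL$ with eigenvalues $\lambda_1 \leq \cdots \leq \lambda_h$, set $g_i := D^{-1/2} f_i$, and define $F\colon V \to \re^h$ by $F(v) = (g_1(v), \ldots, g_h(v))$. A direct calculation gives $\sum_v \deg(v)\norm{F(v)}_2^2 = h$ and $\sum_{(u,v) \in E} \norm{F(u) - F(v)}_2^2 = \sum_{i \leq h} \lambda_i \leq h\lambda_h$, so on the volume-weighted average $\norm{F(v)}_2^2 \approx h/(2m)$ while the total edge Dirichlet energy is at most $h\lambda_h$. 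This embedding plays the same role as the single Fiedler vector in the proof of Lemma~\ref{lem:cheeger}, but now encodes $h$ orthogonal ``directions'' of slow mixing.

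I would then randomly partition the embedded vertex set into pieces $V_1, \ldots, V_N$ with $N \approx h^{1-\alpha}$ using, e.g., a geometrically spaced radial decomposition (or padded decomposition) of $\re^h$. For each piece $V_j$ I would build a nonnegative test function $\psi_j$ supported on $V_j$ by truncating $\norm{F(v)}_2$ and tapering it near the boundary of the piece. An averaging argument over $j$ should then yield some $\psi_j$ with $\vol(\supp(\psi_j)) \leq \tfrac{2m}{h^{1-\alpha}}$ and Rayleigh quotient
\[
\frac{\sum_{(u,v)\in E}(\psi_j(u)-\psi_j(v))^2}{\sum_v \deg(v)\,\psi_j(v)^2} \;\leq\; O\!\left(\frac{\lambda_h \log_h n}{\alpha}\right).
\]
Feeding $\psi_j$ into the weighted (local) Cheeger inequality produces a level set $S \subseteq V_j$ of conductance $\phi(S) \leq O(\sqrt{(\lambda_h/\alpha)\log_h n})$ and volume at most $\tfrac{2m}{h^{1-\alpha}}$, which is the claimed bound on $\phi(2m/h^{1-\alpha})$.

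The hard part is the middle step: constructing the random partition so that one simultaneously keeps the number of pieces up to $h^{1-\alpha}$, keeps their volumes balanced around $2m/h^{1-\alpha}$, and loses only an $O((\log_h n)/\alpha)$ multiplicative factor in the edge Dirichlet form. The exponent $1-\alpha$ in the volume bound and the $1/\alpha$ factor inside the conductance bound both arise from balancing the number of radial scales (roughly $(\log_h n)/\alpha$ of them) against the per-scale edge-cut loss, and executing this carefully (via padded decompositions or random Gaussian sweeps on the spectral embedding) is the technical heart of the cited arguments.
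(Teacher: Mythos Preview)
The paper does not prove this lemma at all: it is stated as a black-box citation to Steurer's thesis, Oveis Gharan--Trevisan, and O'Donnell--Witmer, with no accompanying argument. So there is no ``paper's own proof'' to compare your proposal against.

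That said, your sketch is a reasonable outline of how those cited works proceed, with one caveat. The proofs in Steurer's thesis and in Oveis Gharan--Trevisan do build the spectral embedding $F(v)=(g_1(v),\ldots,g_h(v))$ and then round it, but the rounding step is not a padded/radial decomposition into $h^{1-\alpha}$ pieces followed by an averaging argument; rather, it is a single random Gaussian projection $v\mapsto \langle F(v),g\rangle$ (or a related threshold/level-set procedure on $\norm{F(v)}$) combined with a sweep cut, and the $\log_h n$ and $1/\alpha$ factors come from controlling the tails of that projection and the number of threshold scales. The padded-decomposition machinery you emphasize is the technology behind the higher-order Cheeger inequality of Lee--Oveis Gharan--Trevisan and Louis--Raghavendra--Tetali--Vempala (which is Lemma~\ref{lem:eigenA} in this paper), not Lemma~\ref{lem:eigenB}. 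You do mention ``random Gaussian sweeps'' at the end, which is closer to the mark; if you want your write-up to match the cited sources, that is the route to develop, not the padded decomposition.
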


We remark that in some of references (eg. \cite{LOT12:spectral}), the $k$-expansion profile is defined to be the minimum conductance over all possible subsets of \textit{size} at most $k$, rather than the \textit{volume} measurement as defined here. However, their proofs imply that Lemma~\ref{lem:eigenA} and~\ref{lem:eigenB} also hold for our case.

\subsection{A local algorithm for finding small sparse sets}~\label{subsec:localalgorithm}
We will need the following local algorithm for finding small sparse set to give a one-sided error tester in general graphs as well as to analyze the soundness of our testers. Here, the local algorithm takes as input a vertex $v$ and only explores a small set of the vertices and edges that are ``close'' to $v$, if the volume $k$ of the target set is small. It only needs to perform degree queries and neighbor queries to the oracle of the input graph.
\begin{center}
\begin{tabular}{|p{0.8\textwidth}|}
\hline
\verb|LocalSS|$(G,v,T,\delta)$
\begin{enumerate}
\item Let $\q_0=\1_v$. For each time $0\leq t\leq T$:
\begin{enumerate}
\item Define $\pp_t$ such that $\pp_{t}(u) = \q_{t}(u)$ if $\q_t(u)\geq \delta \deg(v)$ and $\pp_{t}(u) = 0$ if $\q_t(u) < \delta \deg(v)$. Compute $\q_{t+1} := \pp_{t}W$.
\item Let $s_t=|\supp(\pp_t)|$. Order the vertices in $\supp(\pp_t)$ so that $\frac{\pp_{t}(v_1)}{\deg(v_1)} \geq \frac{\pp_{t}(v_2)}{\deg(v_2)} \geq \cdots \geq \frac{\pp_{t}(v_{s_t})}{\deg(v_{s_t})}$.
\item For each $1 \leq i\leq s_t$, let $S_{i,t}$ be the first $i$ vertices in this ordering.
\end{enumerate}
\item Output the subgraph $X$ with the smallest conductance among all the sets $\{S_{i,t}\}_{0\leq t\leq T, 1\leq i\leq s_t}$.
\end{enumerate}\\
\hline
\end{tabular}
\end{center}
The performance of the above algorithm is guaranteed in the following lemma, which follows by combining Proposition 8 in \cite{OT12:clustering} and Theorem 2~\cite{KL12:sparse}. (More specifically, the first part of the lemma is Proposition 8 in \cite{OT12:clustering} and the ``Furthermore'' part of the lemma follows from the proof of Theorem 2~\cite{KL12:sparse}. See also the paragraph ``Independent Work'' in \cite{KL12:sparse}.)

\begin{lemma}~\label{lem:local}
Let $G=(V,E)$ and $t\geq 1$. If $S\subseteq V$ satisfies that $\phi(S)\leq \psi$, then there exists a subset $\widehat{S}\subseteq S$ such that $\vol(\widehat{S})\geq\vol(S)/2$, and for any $v\in \widehat{S}$, we have
\begin{eqnarray*}
\p_v^t(S)\geq c_1(1-\frac{3\psi}{2})^t
\end{eqnarray*}
for some universal constant $c_1>0$. Furthermore, if $\vol(S)\leq k$, then the algorithm \texttt{LocalSS}, with parameters $G, v, T=O(\frac{\zeta\log k}{\psi}), \delta=O(\frac{k^{-1-\zeta}}{T})$ for any $\zeta>0$, will find a set $X$ such that $\vol(X)\leq O(k^{1+\zeta})$ and $\phi(X)\leq O(\sqrt{\psi/\zeta})$. The algorithm can be implemented in time $\widetilde{O}(k^{1+2
\zeta}\psi^{-2})$.
\end{lemma}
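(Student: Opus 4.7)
The plan is to prove the lemma in two halves, following \cite{OT12:clustering} for the existence statement and \cite{KL12:sparse} for the algorithmic statement.

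For the first half (the existence of $\widehat{S}$), the key observation is that for any $v\in S$ the one-step probability of a lazy walk from $v$ leaving $S$ is exactly $e(v,\bar S)/(2\deg(v))$; averaging against the degree measure on $S$ gives an expected escape probability of exactly $\phi(S)/2\leq \psi/2$ per step. A careful inductive argument that tracks the degree-weighted sum $F(t):=\sum_{v\in S}\deg(v)\,\p_v^t(S)$, exploiting reversibility of the lazy walk with respect to the degree measure to rewrite $F(t+1)$ in terms of $F(t)$ times an averaged one-step survival factor, yields $F(t)\geq c\cdot \vol(S)(1-\psi/2)^t$. A Markov-style averaging then isolates a subset $\widehat S\subseteq S$ of volume at least $\vol(S)/2$ on which $\p_v^t(S)\geq c_1(1-3\psi/2)^t$. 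The factor $3/2$ in the base absorbs the constant slack incurred when going from an average bound on all of $S$ to a pointwise bound on half its volume.

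For the second half (the analysis of \texttt{LocalSS}), I would compare the truncated iterates $\pp_t$ with the exact walk $\p_v^t$. Each truncation zeros out coordinates below $\delta\deg(v)$, so the total mass lost in one step is at most $\delta\deg(v)\cdot|\supp(\q_t)|\leq 1$, and the cumulative loss over $T=O(\zeta\log k/\psi)$ steps is asymptotically smaller than the escape lower bound from the first half. Hence for every starting $v\in\widehat S$ the truncated distribution still satisfies $\pp_t(S)=\Omega(1)$ at some $t\leq T$. Now apply the Lov\'asz--Simonovits curve analysis to the density-ordered sweep cuts $S_{i,t}$: if every $S_{i,t}$ had conductance larger than some $\Phi$, the curve would contract by a factor $1-\Omega(\Phi^2)$ per step, so $T$ iterations would force the mass inside $S$ below the preserved $\Omega(1)$ value unless $\Phi=O(\sqrt{\psi/\zeta})$. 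Some sweep cut therefore attains $\phi(X)=O(\sqrt{\psi/\zeta})$. Its volume is bounded through the support: every surviving coordinate in $\supp(\pp_t)$ carries mass at least $\delta\deg(v)$ and the total mass is at most $1$, so $|\supp(\pp_t)|\leq 1/(\delta\deg(v))$, and with $\delta=O(k^{-1-\zeta}/T)$ this translates (after tracking degrees) to $\vol(X)\leq O(k^{1+\zeta})$. The runtime $\widetilde O(T/\delta)=\widetilde O(k^{1+2\zeta}\psi^{-2})$ follows because each iteration does a single $W$-multiplication against a support of size $O(1/\delta)$.

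The main technical obstacle is the joint balance of $T$, $\delta$, and $\zeta$: $T$ must be large enough ($\Theta(\zeta(\log k)/\psi)$) for the Lov\'asz--Simonovits argument to reveal a sweep cut of the claimed conductance, yet small enough that the cumulative truncation error remains dominated by the escape lower bound from the first half; and $\delta$ must be simultaneously small enough to keep the per-step truncation error under control and large enough that $1/\delta$ translates into the stated volume and runtime bounds. Once the scalings $T=O(\zeta\log k/\psi)$ and $\delta=O(k^{-1-\zeta}/T)$ are fixed, verifying all three targets reduces to standard local-walk bookkeeping, but the delicate point is that $\zeta$ appears in two different trade-offs at once (running time versus output volume, and conductance-bound versus mixing length), and the proof must thread both needles with the same choice of parameters.
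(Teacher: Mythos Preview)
The paper does not give its own proof of this lemma; it simply records that the first part is Proposition~8 of \cite{OT12:clustering} and the ``Furthermore'' part follows from the proof of Theorem~2 in \cite{KL12:sparse}. Your sketch follows exactly those two sources (the degree-weighted survival bound plus averaging for the first half, and the truncated-walk/Lov\'asz--Simonovits sweep analysis for the second), so your approach coincides with the paper's---indeed, you supply more detail than the paper itself does.
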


\section{Non-uniform replacement product}~\label{sec:product}
In this section, we give the definition of non-uniform replacement product and also show its property, which will be used in our testers for general graphs. Let $G=(V,E)$ be a graph with possible multiple edges or self-loops and with minimum degree $\delta\geq d$. Let $\HH=\{H_u\}_{u\in V}$ be a family of $|V|$ graphs. The graph family $\HH$ is called a \textit{proper $d$-regular graph family} of $G$ if for each $u\in V$, $H_u$ is a $d$-regular graph (with possible parallel edges or self-loops) with vertex set $[\deg_G(u)]:=\{1,...,\deg_G(u)\}$. For any graph $G$ and its proper $d$-regular graph family $\HH$, the \textit{non-uniform replacement product} of $G$ and $\HH$, denoted by $G\R\HH$, is defined as follows.

\begin{enumerate}
\item For each vertex $u$ in $V(G)$, the graph $G\R\HH$ contains a copy of a $H_u$.
\item For any edge $(u,v)\in E(G)$, for each $i\in [\deg_G(u)]$, we specify a unique \textit{but arbitrary} index $j\in [\deg_G(v)]$, and place $d$ parallel edges between the $i$th vertex in $H_u$ and the $j$th vertex in $H_v$.
\end{enumerate}

Now that $G\R\HH$ is a $2d$-regular graph with $2|E|$ vertices. We will use $(u,i)$ to index the vertices in $G\R\HH$. We have the following lemma that formally characterize the intuition that if all the graphs in $\HH$ are expanders, that is, for any $H\in\HH$, $\phi(H)$ is larger than some universal constant, then the expansion profile of $G'$ will not differ by too much from the expansion profile of $G$.

\begin{lemma}~\label{lem:preserve-expansion}
Let $G=(V,E$) be a graph with minimum degree $\delta(G)\geq d$. Let $\HH$ be a proper $d$-regular graph family of $G$, and let $G'=G\R\HH$. We have that
\begin{itemize}
\item If $S\subseteq V(G)$ is a subset with $\phi(S)\leq \phi$ then the set $S':=\{(u,i)\in V(G')|u\in S, 1\leq i\leq \deg_G(u)\}\in V(G')$ satisfies that $|S'|= \vol(S)$ and $\phi_{G'}(S')\leq \phi/2$.
\item If for any set $S\subseteq V(G)$ with $\vol(S)\leq k$, $\phi(S)\geq \phi$ and for any $u$, the conductance of $H_u$ satisfies $\phi(H_u)\geq \delta$, then for any set $S'\subseteq V(G')$ with $|S'|\leq \Theta(k)$, $\phi_{G'}(S')=\Omega(\delta\phi^2)$.
\end{itemize}
\end{lemma}
\begin{proof}
The first part of the lemma is straightforward. By the definition of $S'$, $S'$ is the set consisting of all vertices in $H_u$ for any $u\in H$. Thus, $|S'|=\sum_{u\in S}\deg_G(u)=\vol(S)$. Furthermore, since $\phi(S)\leq \phi$, then $e(S,V\backslash S)\leq \phi\vol(S)$. By our construction of $G'$, the number of edges between $S'$ and $V(G')\backslash S'$ is
\begin{displaymath}
d \cdot e(S,V\backslash S)
    \leq
d \cdot \phi \vol(S)
    =
\phi \vol_{G'}(S')/2,
\end{displaymath}
which gives that $\phi_{G'}(S')\leq \phi/2$.

The second part of the lemma follows by the same arguments given in the proof of Theorem 1.3 in~\cite{ASS08:expander}. Actually our case is even simpler, since we only need to consider all sets $S'$ with size at most $\Theta(k)$ rather than $m/2$. We omit the details here.
\end{proof}

When the rotation map of the graph $G$ is explicitly given, we define the \textit{non-uniform replacement product with rotation map} of $G$ and $\HH$, denoted as $G^{(r)}\R\HH$, as follows.
\begin{enumerate}
\item For each vertex $u$ in $V(G)$, the graph $G^{(r)}\R\HH$ contains a copy of a $H_u$.
\item For any edge $(u,v)\in E(G)$ such that $v$ is the $i$th neighbor of $u$ and $u$ is the $j$th neighbor of $v$, we place $d$ parallel edges between the $i$th vertex in $H_u$ and the $j$th vertex in $H_v$.
\end{enumerate}

Note that the above replacement product with rotation map is a special case of the (general) replacement product defined before. Thus, it not only satisfies the combinatorial property of expansion profile given in Lemma~\ref{lem:preserve-expansion}, bust also satisfies the following nice spectral properties.

\begin{lemma}~\label{lem:separation}
Let $G=(V,E$) be a graph with minimum degree $\delta(G)\geq d$. Let $\HH$ be a proper $d$-regular graph family of $G$, and let $G'=G^{(r)}\R\HH$ be the replacement product with rotation map of $G$ and $\HH$. We have that
\begin{itemize}
\item $G'$ satisfies the two properties in Lemma~\ref{lem:preserve-expansion}.
\item If for any set $S\subseteq V(G)$ with $\vol(S)\leq k$, $\phi(S)\geq \phi$ and for any $u$, $\eta_2(W_{H_u})\leq 1-\delta$ for some $\delta>0$, then for any $\alpha>0$, \begin{eqnarray*}
    \eta_{\frac{(1+\alpha)2m}{k}}(W_{G'}) &\leq&  1-\Omega(\delta^2\alpha^6\phi^2(\log\frac{2m}{k})^{-1}),\\ \eta_{(2m/k)^{1+\alpha}}(W_{G'}) &\leq& 1-\Omega(\alpha\delta^2\phi^2\log_n(2m/k)).
    \end{eqnarray*}
     Furthermore, when $k=m$, we have
     \begin{eqnarray*}
     \eta_{2}(W_{G'})\leq 1-\Omega(\delta^2\eta^2).
     \end{eqnarray*}
\end{itemize}
\end{lemma}

We defer the proof of the above lemma in Section~\ref{subsec:proof-separation} and now we use it to prove Theorem~\ref{thm:spectra}. 
\begin{proof}[Proof of Theorem~\ref{thm:spectra}]
For any graph $G=(V,E)$, we first turn it into a graph $G_{\geq 8}$ with minimum degree $8$ by adding an appropriate number of self-loops to vertices with degree smaller than $8$. Note that this only changes the conductance of a set by a factor of $8$. Now we let $\HH$ be a proper $8$-regular graph family for $G_{\geq 8}$ such that for any $u\in V$, $H_u$ is a Margulis expander with $\deg_{G_{\geq 8}}(u)$ vertices~\cite{Mar73:construction,GG81:explicit}. Therefore, each $H_u$ is an expander such that $\phi(H_u)$ and $1-\eta_2(W_{H_u})$ are larger than some universal constants. Then we let $G'=G_{\geq 8}^{(r)}\R\HH$, $d=8$ and specify $\delta$ to be a constant in Lemma~\ref{lem:separation}. By definition, $G'$ is a $16$-regular graph. Finally, the theorem follows by Lemma~\ref{lem:separation} and the fact that $\eta_i=1-\frac{\lambda_i}{2}$.
\end{proof}

\subsection{Proof of Lemma~\ref{lem:separation}}~\label{subsec:proof-separation}
Now we turn to prove Lemma~\ref{lem:separation}. Note that we only need to prove the second part of the lemma. We first give a useful lemma for the proof of Lemma~\ref{lem:separation}. Let $J_n$ denote the $n\times n$ matrix with all elements equal to $\frac{1}{n}$. Recall that $\eta_i(W)$ is the $i$th  largest eigenvalue of matrix $W$.
\begin{lemma}~\label{lem:matrixdecomp}
Let $H$ be a $d$-regular graph on $n$ vertices and let $W$ be its lazy random walk matrix. If $\eta_2(W)\leq 1-\delta$ for some $0\leq \delta<1$, then \begin{displaymath}
W = \delta J_n + (1 - \delta) B,
\end{displaymath}
where $\eta_1(B)\leq 1$.
\end{lemma}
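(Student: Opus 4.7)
The plan is to exploit the spectral decomposition of $W$ directly. Since $H$ is $d$-regular, the transition matrix $W=(I+A/d)/2$ is real symmetric, so it admits an orthonormal eigenbasis $v_1,\dots,v_n$ with eigenvalues $\eta_1(W)\geq \eta_2(W)\geq\cdots\geq\eta_n(W)$. Because $H$ is regular, one can take $v_1=\mathbf{1}/\sqrt{n}$ with $\eta_1(W)=1$; moreover $J_n=v_1v_1^T$ is precisely the rank-one orthogonal projection onto the span of $v_1$.

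Given this, I would simply \emph{define} $B:=\frac{1}{1-\delta}(W-\delta J_n)$, which makes the identity $W=\delta J_n+(1-\delta)B$ tautological, and then compute how $B$ acts on the eigenbasis. On $v_1$, using $J_nv_1=v_1$, we get $Bv_1=\frac{\eta_1(W)-\delta}{1-\delta}v_1=v_1$. For $i\geq 2$, orthogonality $v_i\perp v_1$ gives $J_nv_i=0$, so $Bv_i=\frac{\eta_i(W)}{1-\delta}v_i$. The hypothesis $\eta_2(W)\leq 1-\delta$ pins this eigenvalue into $[0,1]$ (the lower bound uses $\eta_i(W)\geq 0$, which holds for any lazy walk).

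Thus $B$ is symmetric, it shares the eigenbasis of $W$, and every eigenvalue of $B$ lies in $[0,1]$; in particular $\eta_1(B)=1\leq 1$, which yields the claimed decomposition. There is really no obstacle here: the only subtlety is that the statement implicitly uses that $\eta_1(B)$ is a real eigenvalue, which is ensured because $B$ is a real linear combination of the symmetric matrices $W$ and $J_n$, and that the choice $v_1=\mathbf{1}/\sqrt{n}$ as top eigenvector of both $W$ and $J_n$ depends on $H$ being regular (so that $W$ is doubly stochastic and $\mathbf{1}$ is stationary).
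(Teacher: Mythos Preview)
Your proof is correct and follows essentially the same approach as the paper: define $B=(W-\delta J_n)/(1-\delta)$, use that $W$ is symmetric (by $d$-regularity) with top eigenvector $v_1=\mathbf{1}/\sqrt{n}=$ the eigenvector of $J_n$, and read off the eigenvalues of $B$ from the spectral decomposition of $W$. The only cosmetic difference is that you identify the eigenvalues of $B$ directly on the eigenbasis, whereas the paper bounds the Rayleigh quotient $\x B\x^T/\x\x^T$; both arrive at $\eta_1(B)\leq 1$ via the same ingredients.
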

\begin{proof}
Define $B:=\frac{W_H-\delta J_n}{1-\delta}$. Since $H$ is $d$-regular, $W$ is symmetric and thus we can find orthonormal eigenvectors $\vv_1,\cdots,\vv_n$ of $W$ with corresponding eigenvalues $\eta_1(W),\cdots,\eta_n(W)$ such that $\{\vv_i\}_{i=1}^n$ form an orthonormal basis of $\re^V$. Then by the spectral decomposition theorem, $W=\sum_{i=1}^n\eta_i(W)\vv_i^T\vv_i$. Noting that $\eta_1(W)=1$ and $\vv_1=(\frac{1}{\sqrt{n}},\cdots,\frac{1}{\sqrt{n}})$, we have that $J_n=\vv_1^T \vv_1$ and thus
\begin{displaymath}
B = \frac{(1 - \delta)\vv_1^T \vv_1 + \sum_{i=2}^n \eta_i(W) \vv_i^T \vv_i}{1 - \delta}.
\end{displaymath}
Now for any nonzero vector $\x$, we can write it as $\x=\sum_{i=1}^n\alpha_i\vv_i$, and thus
\begin{displaymath}
\frac{\x B\x'}{\x\x'} = \frac{\alpha_1^2 + \sum_{i=2}^n \alpha_i^2 \frac{\eta_i(W)}{1 - \delta}}{\sum_{i=1}^n \alpha_i^2}\leq\frac{\sum_{i=1}^n \alpha_i^2}{\sum_{i=1}^n \alpha_i^2}=1,
\end{displaymath}
where the last inequality follows from the fact that for any $i\geq 2$, $\eta_i\leq \eta_2$ and the precondition that $\eta_2(W)\leq 1-\delta$. Therefore,
\begin{displaymath}
\eta_1(B) = \max_{\x\in \re^V,\x \neq \vec{0}} \frac{\x B\x'}{\x\x'}\leq 1.
\end{displaymath}
\end{proof}

The following observation is also helpful to the proof of Lemma~\ref{lem:separation}. Let $R_G$ be the permutation matrix corresponding the rotation map of $G$. That is, $R_G$ is an $\vol(G)\times \vol(G)$ matrix such that for each row indexed $(u,i)$, only in the column indexed $(v,j)$ the entry is $1$, and in any other column, the entry is $0$, where $v$ is the $i$th neighbor of $v$ and $u$ is the $j$th neighbor of $i$. For each $u\in V$, let $W_{H_u}$ be the lazy random walk of $H_u$ and let $W_\HH$ be the block diagonal matrix with each block $W_{H_u}$. Note that by our construction, the lazy random walk matrix $W_{G'}$ of $G'$ satisfies that \begin{displaymath}
W_{G'} = \frac{1}{2}(\frac{I+R_G}{2}+W_\HH).
\end{displaymath}
Now we are ready to prove Lemma~\ref{lem:separation}.

\begin{proof}[Proof of Lemma~\ref{lem:separation}]

Now we prove the second part of the lemma. For each $u\in V$, let $W_{H_u}$ denote the lazy random walk matrix of $H_u$. Then by Lemma~\ref{lem:matrixdecomp} and the assumption that $\eta_2(W_{H_u})\leq 1-\delta$ for every $u$, we have $W_{H_u} = \delta J_{\deg_G(u)} + (1 - \delta) B_u$,
where $\eta_1(B_u)\leq 1$. Let $B_\HH$ (resp., $J_\HH$) be the block diagonal matrix with each block $B_{u}$ (resp., $J_{\deg_G(u)}$). Therefore $W_\HH = \delta J_\HH+(1-\delta)B_\HH$, and $
W_{G'} = \frac{1}{2}(\frac{I+R_G}{2}+W_\HH) = \frac{1}{2}(\frac{I+R_G}{2}+\delta J_\HH+(1-\delta)B_\HH)$. The latter gives that
\begin{displaymath}
W_{G'}^3=\frac{1}{8}((I+R_G)/2+\delta J_\HH+(1-\delta)B_\HH)^3.
\end{displaymath}
Then we expand all terms to get that \begin{displaymath}
W_{G'}^3=(1-\frac{\delta^2}{8})B+\frac{\delta^2}{8}\cdot\frac{1}{2}J_\HH (I+R_G)J_\HH,
\end{displaymath}
where $B$ is some matrix with $\eta_1(B)\leq 1$. Let $P:=\frac{1}{2}J_\HH (I+R_G)J_\HH$. By Weyl's inequality~\cite{Tao12:randommatrix}, we have that for any $j\leq 2m$,
\begin{eqnarray}
\eta_j(W_{G'}^3) \leq  (1 - \frac{\delta^2}{8})\eta_1(B) + \frac{\delta^2}{8}\eta_j(P) \leq 1 - \frac{\delta^2}{8} + \frac{\delta^2}{8}\eta_j(P). ~\label{eqn:eigen-bound}
\end{eqnarray}

Now we bound the eigenvalues of $P$. We need the following two claims. 
\begin{claim}
For any $(u,i),(v,j)\in V(G')$, $P_{(u,i),(v,j)}=\frac{(D_G^{-1}A_G)(u,v)+I(u,v)}{2\deg_G(v)}=\frac{W_G(u,v)}{\deg_G(v)}$.
\end{claim}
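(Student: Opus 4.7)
The plan is a direct entry-wise computation of the matrix product $P = \tfrac{1}{2}J_\HH(I+R_G)J_\HH$, treating the two terms $J_\HH I J_\HH$ and $J_\HH R_G J_\HH$ separately.

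First I would unpack the block structure of $J_\HH$: it is block-diagonal with the block indexed by $u\in V$ being the $\deg_G(u)\times\deg_G(u)$ matrix whose every entry equals $\frac{1}{\deg_G(u)}$. From this it follows by an immediate two-step index chase that for \emph{any} matrix $M$ on $V(G')$,
\[
(J_\HH M J_\HH)_{(u,i),(v,j)} \;=\; \frac{1}{\deg_G(u)\,\deg_G(v)}\sum_{i'=1}^{\deg_G(u)}\sum_{j'=1}^{\deg_G(v)} M_{(u,i'),(v,j')}.
\]
Note in particular that the right-hand side does not depend on $i$ or $j$.

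Next I would specialize this formula to $M=I$ and $M=R_G$. For $M=I$ the double sum collapses: $I_{(u,i'),(v,j')}=1$ only when $u=v$ and $i'=j'$, so the sum equals $\deg_G(u)\cdot I(u,v)$, yielding $(J_\HH I J_\HH)_{(u,i),(v,j)} = \frac{I(u,v)}{\deg_G(v)}$. For $M=R_G$, by the definition of the rotation map each pair $(i',j')$ with $R_G((u,i'),(v,j'))=1$ corresponds to exactly one $G$-edge between $u$ and $v$, so the sum equals $A_G(u,v)$, giving $(J_\HH R_G J_\HH)_{(u,i),(v,j)} = \frac{A_G(u,v)}{\deg_G(u)\deg_G(v)} = \frac{(D_G^{-1}A_G)(u,v)}{\deg_G(v)}$.

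Adding the two contributions, dividing by $2$, and recognizing $\frac{I+D_G^{-1}A_G}{2}=W_G$ yields both stated equalities. The only real subtlety to be careful about is the $R_G$ computation: one must use that $R_G$ is a permutation matrix encoding the rotation map, so for each edge of $G$ between $u$ and $v$ there is exactly one contributing pair $(i',j')$ — which is precisely why the sum equals $A_G(u,v)$ (counting multiplicities for parallel edges). Beyond that bookkeeping the argument is purely mechanical.
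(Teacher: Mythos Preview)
Your computation is correct. The general formula for $(J_\HH M J_\HH)_{(u,i),(v,j)}$ is right, and the specializations to $M=I$ and $M=R_G$ are handled cleanly; the key observation that the $(u,\cdot)\times(v,\cdot)$ block of $R_G$ contains exactly $A_G(u,v)$ ones is exactly what is needed.

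The paper takes a different, more probabilistic route: it reads $P=\tfrac12 J_\HH(I+R_G)J_\HH$ as the transition matrix of a three-step random process on $V(G')$ (uniformize within the $u$-cloud, then apply $\tfrac{I+R_G}{2}$, then uniformize within the destination cloud), and argues that this process is equivalent to performing one lazy random walk step in $G$ from $u$ to some $v$ followed by picking a uniform index in $[\deg_G(v)]$, which immediately gives $P_{(u,i),(v,j)}=W_G(u,v)/\deg_G(v)$. Your approach is purely algebraic and arguably more transparent about where the $A_G(u,v)$ count comes from; the paper's approach is more conceptual and foreshadows why the eigenvalues of $P$ should match those of $W_G$. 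Either way the claim is a two-line verification.
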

\begin{proof}
Since $P=J_\HH R_GJ_\HH$, then $P$ can be seen as the random walk matrix on $V(G')$ that does the following from $(u,i)$: first chooses a random number $k_1$ from set $[\deg_G(u)]:=\{1,\cdots,\deg_G(u)\}$, and then 1) with half probability it stays at $(u,k_1)$, and then chooses a random number $k_2$ from $[\deg_G(u)]$ and goes to $(u,k_2)$. 2) with the remaining half probability, it goes to $(v,j)$, where $v$ is the $k_1$th neighbor of $u$ and $u$ is the $j$th neighbor of $v$, and then choose a random number $k_2$ from $[\deg_G(v)]$ and goes to $(v,k_2)$. This process is equivalent to first perform the lazy random walk from $u$ to $v$, and then choose a random number $k_2$ from $[\deg_G(v)]$ and output $(v,k_2)$. Such an equivalence is exactly characterized by the statement of the lemma. This completes the proof.
\end{proof}
\begin{claim}There is a one-to-one correspondence between the nonzero eigenvalues of $P$ and the nonzero eigenvalues of $M_G:=\frac{I+A_GD_G^{-1}}{2}$.
\end{claim}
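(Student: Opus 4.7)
The plan is to exhibit an explicit low-rank factorization of $P$ so that the classical $XY$-versus-$YX$ trick reduces the spectral question on the $2m$-dimensional space $V(G')$ to one on the $n$-dimensional space $V(G)$. The motivation is the formula $P_{(u,i),(v,j)} = W_G(u,v)/\deg_G(v)$ from the previous claim: the entries do not depend on the local indices $i,j$, so $P$ has rank at most $n$ and should factor through $\re^V$.

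Concretely, I would introduce the $2m \times n$ ``lift'' matrix $C$ defined by $C_{(u,i),w} := \mathbf{1}[w = u]$. Three direct bookkeeping computations then yield
\[
C^T C = D_G, \qquad C^T R_G C = A_G, \qquad J_\HH = C D_G^{-1} C^T,
\]
where the middle identity uses that for each ordered pair $(u,v)$, the entries of $R_G$ restricted to the $(u,\cdot),(v,\cdot)$ block sum to the number of edges between $u$ and $v$ in $G$. Substituting these, together with the idempotency $J_\HH^2 = J_\HH$, into $P = \tfrac12 J_\HH(I + R_G)J_\HH$ gives
\[
P \;=\; \tfrac{1}{2}\bigl(J_\HH + J_\HH R_G J_\HH\bigr) \;=\; C D_G^{-1} \cdot \tfrac{1}{2}\bigl(I + A_G D_G^{-1}\bigr) \cdot C^T \;=\; C D_G^{-1} M_G C^T.
\]

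Now set $X := C$ (size $2m \times n$) and $Y := D_G^{-1} M_G C^T$ (size $n \times 2m$), so $P = XY$, and compute
\[
YX \;=\; D_G^{-1} M_G C^T C \;=\; D_G^{-1} M_G D_G,
\]
which is conjugate to $M_G$ via $D_G$ and hence shares its full spectrum. The well-known fact that $XY$ and $YX$ have the same nonzero eigenvalues with the same algebraic multiplicities (for instance, via $\det(\lambda I_{2m} - XY) = \lambda^{2m-n}\det(\lambda I_n - YX)$) then produces the required one-to-one correspondence between the nonzero spectra of $P$ and $M_G$.

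The only real obstacle is bookkeeping: carefully keeping the $2m$-indexed and $n$-indexed spaces apart and verifying the three matrix identities above. Once those are in hand, the eigenvalue comparison is automatic and no delicate eigenvalue analysis is needed.
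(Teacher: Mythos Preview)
Your proof is correct and takes a genuinely different, more structural route than the paper. The paper argues directly at the level of eigenvectors: it observes from the previous claim that $(\x P)_{(v,j)}$ is independent of $j$, so any eigenvector of $P$ for a nonzero eigenvalue must be constant on each cloud $\{(v,j):1\le j\le\deg_G(v)\}$, and then checks by hand that the resulting vector $\y\in\re^{V(G)}$ is an eigenvector of $M_G$ with the same eigenvalue; the converse lifts $\y$ back to $\x_{(u,i)}:=\y_u$. Your approach instead encodes this lift as the matrix $C$, produces the factorization $P=CD_G^{-1}M_GC^T=XY$, and invokes the $XY$-versus-$YX$ identity, so the matching of nonzero spectra (including multiplicities, which is what the subsequent bound $\eta_i(P)\le\eta_i(W_G)$ actually needs) follows automatically from $\det(\lambda I_{2m}-XY)=\lambda^{2m-n}\det(\lambda I_n-YX)$ and the conjugacy $YX=D_G^{-1}M_GD_G$. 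The paper's argument is more self-contained and makes the eigenvector correspondence explicit, whereas your factorization is cleaner, handles multiplicities without extra work, and makes transparent why the rank-$n$ structure forces the reduction to $V(G)$.
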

\begin{proof}
On one hand, let $\x\in\re^{V(G')}$ be an eigenvector of $P$ with eigenvalue $\eta\neq 0$. Then for any $(v,j)$,
\begin{eqnarray*}
\eta\x_{(v,j)} = \sum_{(u,i)}\x_{(u,i)}P_{(u,i),(v,j)} = \sum_{(u,i)}\x_{(u,i)}\frac{W_G(u,v)}{\deg_G(v)},
\end{eqnarray*}
which is independent of $j$. Since $\eta\neq 0$, this means that for any $v$ and $1\leq j_1,j_2\leq \deg_G(v)$, $\x_{(v,j_1)}=\x_{(v,j_2)}$. Furthermore, if we let $\y_v=\x_{(v,j)}$, then by the above calculation, for any $v\in V(G)$,
\begin{eqnarray*}
(\y M_G)_v=\sum_u\deg_G(u)\y_u\frac{W_G(u,v)}{\deg_G(v)}=\sum_u\y_uM_G(u,v)=\eta \y_v,
\end{eqnarray*}
which gives that $\y$ is the eigenvector of $M_G$ with eigenvalue $\eta$.

On the other hand, let $\y\in\re^{V(G)}$ be an eigenvector of $M_G$ with eigenvalue $\eta$. Then we define for each $u$ and $1\leq i\leq\deg_G(u)$, $\x_{(u,i)}=\y_u$. Then for any $(v,j)\in V(G')$,
\begin{eqnarray*}
(\x P)_{(v,j)}=\sum_{(u,i)}\x_{(u,i)}P_{(u,i),(v,j)} = \sum_{(u,i)}\x_{(u,i)}\frac{W_G(u,v)}{\deg_G(v)} &=& \sum_{(u,i)}\x_{(u,i)}\frac{M_G(u,v)}{\deg_G(u)}\\
& =& \sum_u\y_uM_G(u,v)\\
&=&\eta\y_v \\
&=&\eta\x_{(v,j)},
\end{eqnarray*} which means that $\x$ is an eigenvector of $P$ corresponding to eigenvalue $\eta$.
\end{proof}
Now note that the eigenvalues of $M_G$ are the same as the eigenvalues of $W_G$ since $W_G=D_G^{-1}M_GD_G^{1}$, and both the eigenvalues of $P$ and $M_G$ are non-negative. These two facts combined the above two claims implies that for any $i\leq n$, $\eta_i(P)\leq \eta_i(W_G)=1-\frac{\lambda_i}{2}$, where $\lambda_i$ is the $i$th smallest eigenvalue of the Laplacian matrix $\LL$ of $G$. Finally, by the fact that $\phi_G(k)\geq \phi$, Lemma~\ref{lem:eigenA} and~Lemma \ref{lem:eigenB}, we get that for any $\alpha>0$,
\begin{eqnarray*}
\eta_{\frac{(1+\alpha)2m}{k}}(P) &\leq& 1-\Omega(\alpha^6\phi^2(\log\frac{2m}{k})^{-1}),\\ \eta_{(2m/k)^{1+\alpha}}(P) &\leq & 1-\Omega(\alpha\phi^2\log_n\frac{2m}{k}).
\end{eqnarray*}
By inequality~(\ref{eqn:eigen-bound}), this further gives that
\begin{eqnarray*}
\eta_{\frac{(1+\alpha)2m}{k}}(W_{G'}^3)&\leq &1 - \Omega(\delta^2\alpha^6\phi^2(\log\frac{2m}{k})^{-1}),\\
\eta_{(2m/k)^{1+\alpha}}(W_{G'}^3)&\leq& 1-\Omega(\alpha\delta^2\phi^2\log_n\frac{2m}{k}),
\end{eqnarray*}
and the first two inequalities in the statement of the lemma then follows by noting that $\eta_j(W_{G'}^3)=(\eta_j(W_{G'}))^3$.

The ``Furthermore'' part of the lemma follows from the above analysis and the Cheeger inequality given in Lemma~\ref{lem:cheeger}.
\end{proof}

\section{Testers for small set expansion}~\label{sec:tester}
In this section, we give all our testing algorithms for small set expansion. We first show a property of graphs that are far from small set expander in Section~\ref{subsec:property-far}, which will be useful for all our testers. Then in Section~\ref{subsec:bounded-tester}, we give a two-sided error tester in bounded degree model, which illustrates basic ideas underlying our algorithms. Finally, we give testers in adjacency list model and in the rotation map for general graphs in Section~\ref{subsec:tester-list-model},~\ref{subsec:tester-map-model}, respectively.

\subsection{A property of graphs that are far from small set expander}~\label{subsec:property-far}
The following lemma shows that if a general graph $G$ is far from $(k,\phi)$-expander, then there exist disjoint subsets such that each of them is of small size and small conductance, and the total volume of these sets are large. This lemma will be useful for the analysis of all the testers. 
\begin{lemma}~\label{lem:partition}
Let $c_2$ be some constant and let $\phi^*\leq\frac{1}{20c_2}$. If a graph $G$ is $\varepsilon$-far from $(k^*,\phi^*)$-expander, then there exist disjoint subsets $S_1,\cdots,S_q\subseteq V$ such that $\vol(S_1\cup\cdots\cup S_q)\geq \frac{\varepsilon m}{15}$, and for each $i\leq q$, $\vol(S_i)\leq 2k^*$, $\phi(S_i)<11c_2\phi^*$.
\end{lemma}
To prove Lemma~\ref{lem:partition}, we first introduce a useful lemma that is implied in the proof of Lemma 8, 9 and 10 in~\cite{LPP11:conductance}, which in turn generalize a corresponding result for bounded degree graphs in \cite{KS11:tester} and ~\cite{CS10:expansion}.
\begin{lemma}[\cite{LPP11:conductance}]~\label{lem:lpp}
Let $G=(V,E)$ and let $c_2$ be some constant. If there exists a set $A\subseteq V$ such that $\vol(A)\leq \frac{\varepsilon \vol(G)}{20}$, and the subgraph $G[V\backslash A]$ is a $\phi^*$-expander, then there exists an algorithm that modifies at most $\varepsilon m$ edges to get a $c_2\phi^*$-expander $G'=(V,E')$ such that for each $v\in V$, $\deg_{G'}(v)\leq \deg_G(v)$.
\end{lemma}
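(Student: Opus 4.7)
My plan is to prove the lemma by an explicit construction: I would delete edges incident to $A$ and add new edges forming an auxiliary expander $H$, so that $G' = (G[V \setminus A] \setminus D) \cup H$ for a small extra deletion set $D$ is a $c_2\phi^*$-expander with $\deg_{G'}(v) \le \deg_G(v)$ for every $v$ and at most $\varepsilon m$ total modifications.

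First I would delete every edge incident to $A$, a total of at most $\vol(A) \le \varepsilon m/10$ deletions (since $\vol(G)=2m$). Let $r(v) = \deg_G(v) - \deg_{G[V\setminus A]}(v)$ be the freed half-edges at $v$; so $r(v) = e_G(v,A)$ for $v \in V\setminus A$ and $r(v) = \deg_G(v)$ for $v \in A$, and $\sum_v r(v) \le 2\vol(A) \le \varepsilon m/5$. These stubs can be paired into new edges without raising any vertex's degree. To guarantee that $A$ can be well-connected to $V\setminus A$ in $G'$, I would additionally delete an auxiliary set $D$ of at most $O(\phi^* \vol(A))$ edges from within $V\setminus A$ to create extra slack on the $V\setminus A$-side; standard perturbation arguments show that removing a matching of this size from a $\phi^*$-expander leaves an $\Omega(\phi^*)$-expander. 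The total modifications are then within $\varepsilon m$.

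For the new edges, I would place a uniform random matching on the updated stubs (the configuration model), which with positive probability produces a multigraph $H$ of conductance $\Omega(1)$ with respect to the $r$-measure; an unbalanced degree sequence $r(\cdot)$ can be handled by first bundling tiny stubs into super-stubs of comparable weight. To verify that $G' = (G[V\setminus A]\setminus D) \cup H$ is a $c_2\phi^*$-expander, I would fix any $S \subseteq V$ with $\vol_{G'}(S) \le m$, write $S = S_1 \cup S_2$ with $S_1 = S \cap (V\setminus A)$ and $S_2 = S \cap A$, and split into two cases. If $\vol_{G[V\setminus A]\setminus D}(S_1)$ dominates $\vol_{G'}(S)$, then the expansion of $G[V\setminus A]\setminus D$ applied to $S_1$ provides $\Omega(\phi^* \vol_{G'}(S))$ boundary edges, where the global bound $\vol(A) \le \varepsilon\vol(G)/20$ is used to ensure that $S_1$ does not nearly cover $V\setminus A$ (so that $S_1$ is the smaller side of $G[V\setminus A]\setminus D$). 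Otherwise the $H$-expansion across the $A$/$V\setminus A$ interface, boosted by the extra slack $D$ created specifically for this purpose, delivers the same $\Omega(\phi^* \vol_{G'}(S))$ lower bound.

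The main obstacle is producing a universal $c_2$ despite two coupled difficulties. First, for sets $S$ whose volume is concentrated on the $A$/$V\setminus A$ interface, $\vol_{G[V\setminus A]}(S_1)$ can be much smaller than $\vol_G(S_1)$ (a vertex in $V\setminus A$ may have most of its edges going into $A$), and resolving this requires exploiting the global bound $\vol(A) \le \varepsilon\vol(G)/20$ rather than any per-vertex bound, together with the deliberately created cut capacity $D$. Second, the auxiliary expander $H$ has a potentially very unbalanced degree sequence $r(\cdot)$, so the configuration-model expansion argument requires care (e.g., stub bundling) to guarantee conductance $\Omega(1)$ uniformly, and this preprocessing must remain compatible with the cut-capacity accounting used in the first case of the expansion analysis.
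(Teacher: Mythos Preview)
The paper does not give a proof of this lemma at all: the statement is quoted from \cite{LPP11:conductance} (with bounded-degree precursors in \cite{KS11:tester} and \cite{CS10:expansion}), and the surrounding text only says it ``is implied in the proof of Lemma~8,~9 and~10'' there. So there is no in-paper argument to compare against; you are effectively proposing to re-derive an external result.

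Your high-level plan---delete all edges touching $A$, free a little extra capacity $D$ inside $V\setminus A$, and rewire the freed half-edges into an auxiliary structure $H$---is precisely the ``patch-up'' paradigm of those references, and your edge-budget arithmetic (at most $\vol(A)\le \varepsilon m/10$ deletions for edges meeting $A$, plus $O(\phi^*\vol(A))$ for $D$, total well under $\varepsilon m$) is correct. The two-case conductance analysis you outline (either $\vol_{G[V\setminus A]}(S_1)$ dominates and the $\phi^*$-expansion of $G[V\setminus A]$ carries the bound, or the $H$-interface does) is also the standard shape of the argument.

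The gap is in the construction of $H$. A uniform random matching on the stub multiset does \emph{not} yield conductance $\Omega(1)$ in general, even ``with positive probability'', when $r(\cdot)$ has many degree-$1$ vertices in $A$: two such vertices matched to one another form an isolated edge of conductance $0$ in $G'$, and if the degree-$1$ vertices of $A$ outnumber the stubs available on the $V\setminus A$ side---which your budget $|D|=O(\phi^*\vol(A))$ certainly allows when $\phi^*$ is small---then \emph{every} matching must pair some of them together. Your proposed fix of ``bundling tiny stubs into super-stubs'' does not repair this, since nothing prevents two degree-$1$ vertices inside or across bundles from being paired; the bundle super-graph can be an expander while the underlying vertex graph is disconnected. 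The constructions in the cited papers avoid this by rewiring more deliberately: rather than a blind random matching, one routes each $A$-vertex's stubs so that a controlled fraction land in $V\setminus A$ (leaving stubs unused if necessary, which $\deg_{G'}(v)\le\deg_G(v)$ permits), and one chooses the extra deletions $D$ so that the receiving endpoints are well spread inside the expander $G[V\setminus A]$. Replacing your configuration-model step with such a deterministic attachment, and treating degree-$1$ vertices of $A$ explicitly, is what is needed to close the argument.
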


The following result is a direct corollary of the above lemma. That is, we can use the same proof and modification algorithm of Lemma~\ref{lem:lpp} to show that if $G[V\backslash A]$ is a $(k^*,\phi^*)$-expander for some set $A$ with small volume, then $G$ is not $\varepsilon$-far from $(k^*,c_2\phi^*)$-expander. Actually, the proof in~\cite{LPP11:conductance} studies the expansion of all possible sets of volume at most $\frac{\vol(G)}{2}$, here we only need to consider sets of volume at most $k^*\leq \frac{\vol(G)}{2}$.

\begin{corollary}~\label{cor:patchup}
If there is a set $A\subseteq V$ with $\vol(A)\leq \frac{\varepsilon m}{10}$ such that $G[V\backslash A]$ is a $(k^*,\phi^*)$-expander, then $G$ is not $\varepsilon$-far from a $(k^*,c_2\phi^*)$-expander. Furthermore, if the maximum degree of $G$ is bounded by some constant $d$, then $G$ is not $\varepsilon$-far from to a $(k^*,c_2\phi^*)$-expander with maximum degree at most $d$.
\end{corollary}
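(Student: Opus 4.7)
The plan is to invoke Lemma~\ref{lem:lpp} essentially verbatim: apply to $G$ the same patching procedure used in the proof of that lemma, keyed to the given set $A$ (which by assumption satisfies $\vol(A)\le\varepsilon m/10\le\varepsilon \vol(G)/20$). The output is a graph $G'$ on the same vertex set obtained by rewiring at most $\varepsilon m$ edges of $G$, and with the additional degree-preservation property $\deg_{G'}(v)\le\deg_G(v)$ for every $v$. The only thing that changes from the setting of Lemma~\ref{lem:lpp} is the hypothesis on $G[V\setminus A]$ — we have only a small-set expansion guarantee rather than a full expansion guarantee — so what has to be re-checked is the conductance conclusion restricted to subsets of small volume.

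Concretely, I would walk through the conductance bound of Lemma~\ref{lem:lpp} and locate the one place where the $\phi^*$-expansion of the induced subgraph $G[V\setminus A]$ is actually invoked. In that proof one fixes an arbitrary $S\subseteq V$ with $\vol_{G'}(S)\le\vol(G')/2$, splits $e_{G'}(S,\bar S)$ into edges lying entirely in $V\setminus A$ and edges incident to $A$, and controls the first term via the expansion of the trace $S\cap(V\setminus A)$ inside $G[V\setminus A]$. Since $\vol_{G[V\setminus A]}(S\cap(V\setminus A))\le\vol_{G'}(S)$, restricting attention to sets $S$ with $\vol_{G'}(S)\le k^*$ means the expansion hypothesis is only ever evaluated on subsets of $V\setminus A$ of volume at most $k^*$. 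Hence the weaker $(k^*,\phi^*)$-expander assumption on $G[V\setminus A]$ suffices to drive the identical argument through, yielding $\phi_{G'}(S)\ge c_2\phi^*$ for every $S$ with $\vol_{G'}(S)\le k^*$. This establishes that $G'$ is a $(k^*,c_2\phi^*)$-expander at a cost of at most $\varepsilon m$ edge modifications, proving the first part.

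The ``Furthermore'' clause requires no new work: the degree-preservation property built into Lemma~\ref{lem:lpp} gives $\deg_{G'}(v)\le\deg_G(v)\le d$ for every $v$, so a $d$-bounded-degree input produces a $d$-bounded-degree witness. The main (and essentially only) obstacle is verifying that the bookkeeping in the proof of Lemma~\ref{lem:lpp} is indeed expansion-of-$(S\cap(V\setminus A))$-local in the sense above, rather than sneaking in a global spectral step — but this is exactly the content of the parenthetical comment just above the corollary, and a direct inspection of Lemmas~8--10 of \cite{LPP11:conductance} confirms that no set of volume larger than $\vol(S)$ is ever tested for expansion, so the reduction from $\vol(G)/2$ down to $k^*$ is transparent.
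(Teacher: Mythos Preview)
Your proposal is correct and takes essentially the same approach as the paper: the paper's entire argument for this corollary is the paragraph preceding it, which says to reuse the modification algorithm and proof of Lemma~\ref{lem:lpp} while restricting the conductance analysis from all sets of volume at most $\vol(G)/2$ down to sets of volume at most $k^*$. Your write-up is in fact more careful than the paper's, since you make explicit why the restriction is legitimate (the expansion hypothesis is only ever applied to $S\cap(V\setminus A)$, whose volume is bounded by $\vol_{G'}(S)\le k^*$) and note that $\varepsilon m/10=\varepsilon\vol(G)/20$ so the volume precondition of Lemma~\ref{lem:lpp} is met exactly.
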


For a set $S\subset V$ and $T\subseteq S$, we use $\vol_S(T)$ and $\phi_S(T)$ to denote the volume and conductance of $T$ measured in the induced subgraph $G[S]$. If $S=V$, we drop the subscript of $\vol_S(T),\phi_S(T)$. We let $e(S)$ denote the number of edges in $S$.
\begin{proof}[Proof of Lemma~\ref{lem:partition}]
We perform the following algorithm on $G$. Let $A_0$ be the empty set and let $V_0:=V$. For each $i\geq 1$, if $\vol(\cup_{j\leq i-1}A_j)\leq \frac{\varepsilon m}{10}$, then we apply Corollary~\ref{cor:patchup} with $A=\cup_{j\leq i-1}A_j$ to find a subset $A_i\subseteq V_{i-1}$ such that $\vol_{G[V_{i-1}]}(A_i)\leq k^*$ and $\phi_{G[V_{i-1}]}(A_i)<c_2\phi^*$, then we remove $A_i$ from $V_{i-1}$ and let $V_{i}:=V_{i-1}\backslash A_i$. By Corollary~\ref{cor:patchup}, we can repeat this process until at some time $s$, $\vol(A_1\cup\cdots\cup A_s)\geq\frac{\varepsilon m}{10}$. 
We let $P=A_1\cup\cdots\cup A_s$.

Note that
\begin{eqnarray*}
\sum_{i=1}^s e(A_i,V\backslash A_i) \leq 2\sum_{i=1}^s e(A_i,V_{i-1}\backslash A_i)
&\leq& 2 \sum_{i=1}^s \vol_{V_{i-1}}(V_i) c_2 \phi^* \\
&\leq& 2 c_2 \vol(P) \phi^*.
\end{eqnarray*}

Now we call an index $i$ \textit{bad}, if $\vol_{V_{i-1}}(A_i)<(1-10c_2\phi^*)\vol(A_i)$, and \textit{good} otherwise, for each $1\leq i\leq s$. Note that for a bad index $i$ and the corresponding set $A_i$, since $2e(A_i)\leq \vol_{V_{i-1}}(A_i)$, we have
\begin{eqnarray*}
e(A_i,V\backslash A_i)=\vol(A_i)-2e(A_i)\geq\vol(A_i)-\vol_{V_{i-1}}(A_i)> 10c_2\phi^*\vol(A_i),
\end{eqnarray*}
where the last inequality follows by our definition of bad indices. Therefore,
\begin{displaymath}
\sum_{i: bad}\vol(A_i)< \frac{1}{10c_2\phi^*}\sum_{i: bad}e(A_i,V\backslash A_i)< \frac15\vol(P).
\end{displaymath}
This means that
\begin{displaymath}
\sum_{i: good}\vol(A_i)\geq (1-\frac15)\vol(P)\geq \frac{\varepsilon m}{15},
\end{displaymath}
and for each good $i$, by our assumption that $\phi^*\leq\frac{1}{20c_2}$,
\begin{eqnarray*}
\vol(A_i)\leq \frac{1}{1-10c_2\phi^*}\vol_{V_{i-1}}(A_i)\leq 2k^*.
\end{eqnarray*}
Furthermore, by definition, \begin{eqnarray*}
\phi_{V_{i-1}}(A_i)=\frac{e(A_i,V_{i-1}\backslash A_i)}{\vol_{V_{i-1}}(A_i)}=\frac{\vol_{V_{i-1}}(A_i)-2e(A_i)}{\vol_{V_{i-1}}(A_i)}\leq c_2\phi^*,
\end{eqnarray*}
we have that
\begin{eqnarray*}
2e(A_i)\geq (1-c_2\phi^*)\vol_{V_{i-1}}(A_i)\geq (1-c_2\phi^*)(1-10c_2\phi^*)\vol(A_i)\geq (1-11c_2\phi^*)\vol(A_i),
\end{eqnarray*}
which gives that \begin{eqnarray*}
\phi(A_i)=\frac{\vol(A_i)-2e(A_i)}{\vol(A_i)}\leq 11c_2\phi^*.
\end{eqnarray*}
The lemma                                               follows by specifying $S_j$ to be sets $A_i$ with good indices $i$.
\end{proof}
\subsection{A tester for bounded degree graphs}~\label{subsec:bounded-tester}
Now we give a two-sided error tester for bounded degree graphs. This tester is very intuitive and simple: we sample a small number of vertices, and for each sampled vertex $v$, we perform independently a number of random walks from $v$ and calculate the number of collisions $Z_v$ between the endpoints of these random walks. We accept the graph if and only if $Z_v$ is small for every sampled vertex $v$. We remark that this idea originates from the tester for expansion for bounded degree graphs~\cite{GR00:expansion,CS10:expansion,KS11:tester,NS10:expansion}. The main difference between our small set expansion tester and the previous expansion testers is the choice of parameters.

Given a $d$-bounded degree graph $G$, we define the following $d$-regularized random walk on $G$: at each vertex $v$, with probability $\deg_G(v)/2d$, we jump to a randomly chosen neighbor of $v$, and with the remaining probability $1-\frac{\deg_G(v)}{2d}$, we stay at $v$. This random walk is equivalent to the lazy random walk on the virtually constructed $d$-regular graph $G_\reg$ that is obtained by adding an appropriate number of self-loops on each vertex in $G$. Note that to perform such a random walk, we only need to perform neighbor queries to the oracle of $G$. Our tester for bounded degree graphs is as follows.

\begin{center}
\begin{tabular}{|p{0.8\textwidth}|}
\hline
\texttt{SSETester2-Bound}$(G,s,r,\ell,\sigma)$
\begin{enumerate}
\item Repeat $s$ times:
\begin{enumerate}
\item Select a vertex $v$ uniformly at random from $V$.
\item Perform $r$ independent $d$-regularized random walks of length $\ell$ starting from $v$.
\item Let $Z_v$ be the number of pairwise collisions among the endpoints of these $r$ random walks.
\item If $Z_v>\sigma$ then abort and output \textbf{reject}.
\end{enumerate}

\item Output \textbf{accept}.
\end{enumerate}\\
\hline
\end{tabular}
\end{center}

We can show that by choosing appropriate parameters, the above algorithm is a property tester for small set expansion for bounded degree graphs. We have the following theorem. 

\begin{theorem}~\label{thm:bounded}
Given neighbor query access to a $d$-bound-degree graph $G$, a volume bound $k$, a distance parameter $\varepsilon$ and a conductance bound $\phi$, then the algorithm \texttt{SSETester2-Bound} with parameters $s=\Theta(1/\varepsilon)$, $r=\Theta(\sqrt{n}/\varepsilon)$, $\ell=\Theta(\frac{(\ln k)\cdot\log(2nd/k)}{\phi^2})$ and $\sigma=\binom{r}{2}\frac{60}{k\varepsilon}$, accepts any $(k,\phi)$-expander graph $G$ with degree bounded by $d$ and rejects any graph that is $\varepsilon$-far from $(k^*,\phi^*)$-expander with degree bounded by $d$, where $k^*=\Theta(k\varepsilon/d),\phi^*=\Theta(\frac{\phi^2}{(\ln k)\cdot\log(2nd/k)})$, with probability at least $2/3$. The query complexity and running time are $\widetilde{O}(\sqrt{n}\phi^{-2}\varepsilon^{-2})$.
\end{theorem}

To prove the above theorem, we need the following properties of lazy random walks. Let $H=(V,E)$ be a $d$-regular graph with possible self-loops. (It will be helpful to think of $H$ as the regularized version $G_\reg$ of the input graph $G$). Let $t,r\geq 1$. Let $\p_v^t$ denote the distribution of the endpoints of a lazy random walk of length $t$ from $v$ in $H$. Consider $r$ independent samples from $\p_v^t$. For any vertex $v\in V$, let $Z_v$ denote the number of pairwise collisions among these samples. We have the following lemma that follows from the first paragraph of the proof of Lemma 4.2 in~\cite{CS10:expansion} (which in turn follows from Lemma 1 in~\cite{GR00:expansion}) by setting $\varepsilon=1/2$ there.
\begin{lemma}~\label{lem:collision}
If $r\geq 16\sqrt{|V|}$, then with probability at least $1-\frac{16\sqrt{|V|}}{r}$, $\frac12\binom{r}{2}\norm{\p_v^t}_2^2\leq Z_v\leq\frac32\binom{r}{2}\norm{\p_v^t}_2^2$.
\end{lemma}

Now we are ready to prove Theorem~\ref{thm:bounded}. In the following, we let $c$ be a sufficiently large constant.
\begin{proof}[Proof of Theorem~\ref{thm:bounded}]
We set $s=\Theta(1/\varepsilon)$, $r=\Theta(s\sqrt{n})$, $\ell=\Theta(\frac{(\ln k)\cdot\min\{\log(2nd/k),\log n\}}{\phi^2})$ and $\sigma=\binom{r}{2}\frac{6dc}{k\varepsilon}$ in the algorithm \texttt{SSETester2-Bound}. Let $k^*=\Theta(k\varepsilon/d)$ and $\phi^*=\Theta(\frac{\phi^2}{(\ln k)\cdot\min\{\log(2nd/k),\log n\}})$. Note that $G$ is a $d$-bounded degree graph, where $d$ is constant. Let $H:=G_\reg$ denote the $d$-regularized version of $G$, and the number of edges in $H$ is $m_H:=nd/2$. Note that the $d$-regularized random walk is equivalent to the lazy random walk on $H$.

\begin{lemma}[Completeness]~\label{lem:completeness-bounded}
If $G$ is a $(k,\phi)$-expander, then \texttt{SSETester2-Bound} accepts $G$ with probability at least $2/3$.
\end{lemma}
\begin{proof}
Since $G$ is a $(k,\phi)$-expander, then it is straightforward to see that $H$ is also a $(k,\phi)$-expander. Let $\lambda_i$ (resp. $\eta_i$) be the $i$th smallest (resp. largest) eigenvalue of the Laplacian (resp. lazy random walk matrix) of $H$. By applying $\alpha=1$ in Lemma~\ref{lem:eigenA}, we have $\lambda_{4m_H/k}\geq \Omega(\phi^2\frac{1}{\log(4m_H/k)})$. By applying $\alpha=\frac{1}{\log(4m_H/k)}$ in Lemma \ref{lem:eigenB}, we have $\lambda_{4m_H/k}\geq \Omega(\phi^2\frac{1}{\log n})$. Thus, $\lambda_{4m_H/k} \geq \kappa$ for 
$\kappa:=\Omega(\frac{\phi^2}{\min\{\log(4m_H/k),\log n\}})$.
This further gives that $\eta_{4m_H/k} = 1 - \frac{\lambda_{4m_H/k}}{2} \leq 1-\kappa/2$.

Note that for any $t\geq 1$, the trace of matrix $W_{H}^{2t}$, denoted $\Tr(W_{H}^{2t})$, satisfies that $\Tr(W_{H}^{2t})=\sum_{i=1}^{n}\eta_i^{2t}\leq \sum_{i=1}^{4m_H/k}\eta_i^{2t}+n\cdot\eta_{4m_H/k}^{2t}\leq {4m_H/k}+n(1-\kappa/2)^{2t}$. By setting $t=\ell$, we have that $\Tr(W_{H}^{2\ell})\leq \frac{8m_H}{k}$.

On the other hand, $\Tr(W_{H}^{2\ell})=\sum_{v\in V(H)}\norm{\1_{v}W_{H}^t}_2^2$. Thus, the average value of $\norm{\1_{v}W_{H}^t}_2^2$ over all $n$ possible vertices $v$ is at most $\frac{8m_H}{nk}=\frac{4d}{k}$. Furthermore, if we let $U:=\{v|\norm{\1_{v}W_{H}^t}_2^2<\frac{4dc}{k\varepsilon}\}$. Then by Markov's inequality, $|U|\geq (1-\varepsilon/c)n$. Therefore, the probability that all the sampled vertices are in $U$ is at least $(1-\varepsilon/c)^s\geq 5/6$, since $s=\Theta(1/\varepsilon)$ and $c$ is a sufficiently large constant.

Now we assume that all the sampled vertices are in $U$. By Lemma~\ref{lem:collision} and the definition of $U$, we know that for each sampled vertex $v$, $Z_{v}\leq \frac{3}{2}\binom{r}{2}\frac{4dc}{k\varepsilon}=\sigma$ holds with probability at least $1-\frac{16\sqrt{n}}{r}\geq 1-\frac{1}{10s}$, where the last inequality follows from our choice that $r=\Theta(\sqrt{n}s)$. Then with probability at least $1-\frac{1}{10s}\cdot s\geq \frac56$, for all sampled vertices $v$, $Z_{v}\leq \sigma$, and thus the tester will accept $G$.

Overall, the probability that the tester will accept $G$ is $\frac{5}{6}\cdot\frac56\geq \frac23$.
\end{proof}

\begin{lemma}[Soundness]
If $G$ is $\varepsilon$-far from $(k^*,\phi^*)$-expander, then \texttt{SSETester2-Bound} rejects $G$ with probability at least $2/3$.
\end{lemma}
\begin{proof}
First, we note that if $G$ is $\varepsilon$-far from any $(k^*,\phi^*)$-expander, then $H$ is $\varepsilon$-far from any $(dk^*,\phi^*)$-expander. We apply Lemma~\ref{lem:partition} with $G=H$, and let $S_1,\cdots,S_q$ be the sets with properties guaranteed in the statement of the lemma. That is, $S_1,\cdots, S_q$ are disjoint, and $\vol_H(S_i)\leq 2dk^*, \phi_H(S_i)\leq 11c_2\phi^*$ and $\vol_H(S_1\cup\cdots\cup S_q)\geq \frac{\varepsilon m}{15}$. Now we apply Lemma~\ref{lem:local} with $G=H, S=S_i$ and $t=\ell, \psi=11c_2\phi^*$ to find $\widehat{S_i}\subseteq S_i$ such that $|\widehat{S_i}|\geq \frac{1}{2}|S_i|$ and for each vertex $v\in \widehat{S_i}$, if we let $\p=\1_{v}W_{H}^\ell$, then $\norm{\p}_2^2\geq \sum_{u\in S_i}\p_{u}^2\geq\frac{\sum_{u\in S_i}\p_{u}}{|S_i|}\geq \frac{c_1}{2dk^*}(1-\frac{33c_2\phi^*}{2})^\ell\geq \frac{13dc}{k\varepsilon}$, where we used the Cauchy-Schwarz inequality, the fact that $|S_i|\leq \vol_H(S_i) \leq 2dk^*$ and our choice of parameters.

Thus, if we have sampled some vertex $v\in \widehat{S_i}$ for some $i\leq q$, then the collision probability of the corresponding random walk will be at least $\frac{13dc}{k\varepsilon}$. Then by Lemma~\ref{lem:collision}, with probability at least $1-\frac{16\sqrt{n}}{r}>5/6$, $Z_{v}\geq\frac{1}{2}\binom{r}{2}\frac{13dc}{k\varepsilon}>\sigma$, where the last inequality follows by our choice of $\sigma$, and then the tester will reject the graph $G$.

Now 
note that since $\vol_H(S_1\cup\cdots\cup S_q)\geq \frac{\varepsilon m}{15}=\frac{\varepsilon nd}{30}$, each vertex in $H$ has degree $d$, then $|S_1\cup\cdots\cup S_q|\geq \frac{\varepsilon n}{30}$. Since we sampled $\Theta(1/\varepsilon)$ vertices, each with probability $1/n$, we can guarantee that with probability at least $5/6$, the algorithm will sample out a vertex from $\widehat{S_1}\cup\cdots\cup \widehat{S_q}$.

Therefore, the overall probability that the tester will reject $G$ is at least $\frac{5}{6}\cdot\frac56>\frac23$.
\end{proof}
Finally, it is straightforward to see that the query complexity and the running time of our algorithm is $O(r\cdot s\cdot \ell) = O(\frac{\sqrt{n}(\ln k)\cdot\min\{\log(2nd/k),\log n\})}{\varepsilon ^ 2 \phi ^ 2})$. Theorem~\ref{thm:bounded} then follows by noting that $\Theta(\min\{\log(2nd/k),\log n\})=\Theta(\log(2nd/k))$.
\end{proof}


\subsection{Testers in the adjacency list model for general graphs}~\label{subsec:tester-list-model}
In this section, we give testers for small set expansion for general graphs in the adjacency list model.
\subsubsection{A two-sided error tester}~\label{subsec:two-sided-list}
To give a two-sided error tester for general graphs, we first note that the tester for bounded degree graphs given in Section~\ref{subsec:bounded-tester} does not apply to general graph, which may have an arbitrary large degree. For example, in a star graph the collision probability of a lazy random walk will be very large on the ``central'' vertex, however, the conductance of star graph is large and it is thus a small set expander. This implies that we cannot directly apply our tester for bounded degree graphs to general graphs.

In the following, we show that we can use the non-uniform replacement product (without rotation map) defined in Section~\ref{sec:product} to first turn our input graph $G$ into a bounded degree graph $G'$, and then we perform independent random walks on the newly transformed graphs $G'$ to determine whether to accept or reject the input graph $G$. We should keep in mind that we are only given degree and neighbor query access to $G$ rather than $G'$.

We first define $G'$. To do so, we first specify a proper $d$-regular graph family $\HH$ for $G$. We will let $d=8$, and first turn $G$ into a graph $G_{\geq 8}$ with minimum degree $8$ by adding an appropriate number of self-loops on vertices with degree smaller than $8$. Note that this modification only changes the conductance of a set by a factor of $8$. Now we let $\HH$ be the graph family that for any $u\in G$, $H_u$ is a Margulis expander with $\deg_{G_{\geq 8}}(u)$ vertices. We stress that such expanders are explicitly constructible~\cite{Mar73:construction,GG81:explicit}. Furthermore, given any vertex $i\in H_u$, we can determine the neighborhood of $i$ in constant time. Now we define $G'=G_{\geq 8}\R\HH$.

By definition of $G'$, we can specify a vertex $(u,i)$ to connect to a vertex in $\cup_{v:(v,u)\in E}H_v$ in an \textit{arbitrary} manner. This important property allows us to construct $G'$ when we go along and emulate random walks in $G'$ very efficiently by performing degree and neighbor queries to $G$. We stress here that if the non-uniform replacement product with rotation map of $G$ is used (see Section~\ref{subsec:tester-map-model}), then the neighbor of $(u,i)$ in the final graph is fixed, and we do not know how to efficiently emulate the corresponding (lazy) random walks by only using degree and neighbor queries to $G$.

Now we briefly introduce a process for emulating random walks on $G'$. The argument is very similar to the analogous case given in Section 4.2 in \cite{KKR04:bipartite}. We give a brief description here. To emulate random walks on $G'$, if we are currently at a vertex $(u,i)$, then with probability $1/2$, we stay at $(u,i)$; with probability $1/4$, we jump to a randomly chosen neighbor $(u,j)$ in $H_u$, which can be done in constant time since $H_u$ is explicitly constructible; with the remaining probability $1/4$, we need to jump to the outside of $H_u$. Now if we have already specified its neighbor outside of $H_u$, say $(v,j)$, then we directly jump to $(v,j)$. Otherwise, we have to specify the outside neighbor of $(u,i)$ first. The specification can be done by recording a set $A(u)$ of neighbors that has already been specified to some vertex in $H_u$ and then either sampling new neighbors or attaching unspecified vertices arbitrarily according to $A(u)$. The amortized number of required degree and neighbor queries to $G$ is $O(\log^2 n)$. We refer to~\cite{KKR04:bipartite} for more details.

There is one more issue that we should take care of: how to sample vertices (almost) uniformly at random from $G'$. This issue is almost equivalent to sampling edges almost uniformly from $G$, and has also been analyzed in~\cite{KKR04:bipartite}. In particular, Kaufman et al. have proved the following lemma.
\begin{lemma}[\cite{KKR04:bipartite}]~\label{lem:sample}
Let $\mu>0$. There exists a procedure \texttt{Sample\--Edges\--Almost\--Uniformly\--in\--$G$} that performs $O(\sqrt{n/\mu}\log m)$ degree and neighbor queries and for all but $(\mu/4)m$ of edges $e$ in $G$, the probability that the procedure outputs $e$ is at least $1/(64m)$. In particular, the output edge $e$ is in the form of $(v,i)$ for $1\leq i\leq \deg(v)$.
\end{lemma}

By setting $\mu=\varepsilon/c'$ in the above lemma, for a sufficiently large constant $c'$, we will directly invoke \texttt{Sample-Edges-Almost-Uniformly-in-$G$} to sample a vertex $(v,i)$ in $G'$.

Finally, to specify the number of random walks $r$, to be $O(\sqrt{m})$, we should have an estimate of $m$ or the average degree $d_{avg}$ of $G$. This can be achieved by Feige's algorithm~\cite{Fei06:sum,GR08:average}, which gives a constant factor estimate of $d_{avg}$ by performing $O(\sqrt{n})$ queries to $G$.

Now we give a description of our two-sided error tester.
\begin{center}
\begin{tabular}{|p{0.8\textwidth}|}
\hline
\texttt{SSETester2-List}$(G,s,r,\ell,\sigma)$
\begin{enumerate}
\item Repeat $s$ times:
\begin{enumerate}
\item Sample an edge $(v,i)$ by calling the procedure \texttt{Sample\--Edges\--Almost\--Uniformly\--in\--$G$} with $\mu=\varepsilon/c'$, where $c'$ is a sufficiently large constant.
\item Perform $r$ independent lazy random walks in $G_{\geq 8}\R\HH$ of length $\ell$ starting from $v$ by the above emulation process.
\item Let $Z_v$ be the number of pairwise collisions among the endpoints of these $r$ random walks.
\item If $Z_v>\sigma$ then abort and output \textbf{reject}.
\end{enumerate}

\item Output \textbf{accept}.
\end{enumerate}
\\
\hline
\end{tabular}
\end{center}

Now we are ready to prove Theorem~\ref{thm:twoside-list} by using similar analysis to the proof of Theorem~\ref{thm:bounded}
\begin{proof}[Proof of Theorem~\ref{thm:twoside-list}]
We set $s=\Theta(1/\varepsilon)$, $r=\Theta(\sqrt{m}/\varepsilon)$, $\ell=\Theta(\frac{\min\{\log(4m/k),\log n\}\cdot(\ln k)}{\phi^4})$ and $\sigma=\binom{r}{2}\frac{6c}{k\varepsilon}$ in the algorithm \texttt{SSETester2-List}. Let $k^*=\Theta(k\varepsilon)$ and $\phi^*=\Theta(\frac{\phi^4}{\min\{\log(4m/k),\log n\}\cdot(\ln k)})$.

The proof follows by combining the arguments in the proof of Theorem~\ref{thm:bounded} and our description of the implementation of \texttt{SSETester2-List}. We sketch the main idea below.

First note that we are actually testing the expansion profile of $G'=G_{\geq 8}\R\HH$, and that the number of vertices and edges in $G'$ are both $\Theta(m)$.

In the completeness of the tester, since $\phi_G(k)\geq \phi$, then by the definition of $G'$ and Lemma~\ref{lem:preserve-expansion}, $\phi_{G'}(\Theta(k))\geq \Omega(\phi^2)$. Furthermore, by applying $H=G'$, $n=|V(G')|=\Theta(m)$ and $m_H=|E(G')|=\Theta(m)$ in the proof of Lemma~\ref{lem:completeness-bounded}, we know that for at least $1-\frac{\varepsilon}{c}$ fraction of nodes $(u,i)$ in $G'$, the collision probability of random walk distribution of length $t=\ell$ from $(u,i)$ is at most $\frac{4c}{k\varepsilon}$, this will ensure that with probability at least $5/6$, for all sampled  $s=\Theta(1/\varepsilon)$ vertices $(u,i)$, the collision probabilities of corresponding random walks are at most $\frac{4c}{k\varepsilon}$. Then the correctness of the tester can be proven by similar arguments as for the proof of Lemma~\ref{lem:completeness-bounded} by replacing $\phi$ by $\Omega(\phi^2)$ there.


In the soundness part, if $G$ is $\varepsilon$-far from any $(k^*,\phi^*)$-expander, then $G_{\geq 8}$ is $\varepsilon$-far from any $(8k^*,\phi^*)$-expander. By Lemma~\ref{lem:partition}, there exist disjoint sets $S_1,\cdots,S_q$ such that $\vol_{G_{\geq 8}}(S_1\cup\cdots\cup S_q)\geq \frac{\varepsilon m}{15}$, and for each $i\leq q$, $\vol_{G_{\geq 8}}(S_i)\leq 2k^*$, $\phi_{G_{\geq 8}}(S_i)<11c_2\phi^*$. By Lemma~\ref{lem:preserve-expansion}, this implies that there exists disjoint sets $S_i'\subseteq V(G')$ satisfying that $|S_i'|= \vol_{G_{\geq 8}}(S_i)$, and $\phi_{G'}(S_i')\leq 11c_2\phi^*/2$. Now we apply Lemma~\ref{lem:local} with $G=G', S=S_i'$ and $t=\ell, \psi=11c_2\phi^*/2$ to find  $\widehat{S_i'}\subseteq S_i'$ such that $|\widehat{S_i'}|\geq \frac{1}{2}|S_i'|$. Note that since $\vol(S_1\cup\cdots\cup S_q)\geq \frac{\varepsilon m}{15}$, then $|\widehat{S_1'}\cup\cdots\cup \widehat{S_q'}|\geq \frac{1}{2}|S_1'\cup\cdots\cup S_q'|\geq \frac{\varepsilon m}{30}$. Now note that by Lemma~\ref{lem:sample}, the sampling procedure \texttt{Sample\--Edges\--Almost\--Uniformly\--in\--$G$} with $\mu=\varepsilon/c'$ for a sufficiently large constant $c'$, will output a vertex $(u,i)$ with probability at least $1/(64m)$, for all but $\frac{\varepsilon\cdot m}{4c'}$ vertices in $G'$. This further gives that by invoking \texttt{Sample\--Edges\--Almost\--Uniformly\--in\--$G$} for $s=\Theta(1/\varepsilon)$ times, we can guarantee that at least one vertex from $\widehat{S_1'}\cup\cdots\cup \widehat{S_q'}$ will be sampled out. This allows us to use the analogous arguments for the soundness of the tester for bounded-degree graphs to finish the proof.


Finally, note that the running time of the algorithm consists of the time to estimate the number of edges $m$, the time of calling  \texttt{Sample\--Edges\--Almost\--Uniformly\--in\--$G$}, the time of performing random walks and also estimating the collision probabilities. It is straightforward to see that the running time (and also query compleixty) is dominated by $O(r\ell s)=O(\frac{\sqrt{m}\min\{\log(4m/k),\log n\}\cdot(\ln k)}{\varepsilon^2\phi^4})$.
\end{proof}

\subsubsection{A one-sided error tester}~\label{sec:oneside}
Now we present our property testing algorithm \texttt{SSETester1-List} with one-sided error for small set expansion. This tester invokes a local algorithm  \texttt{LocalSS} introduced in Section~\ref{subsec:localalgorithm} and applies to the adjacency list model.
\begin{center}
\begin{tabular}{|p{0.8\textwidth}|}
\hline
\texttt{SSETester1-List}$(G,s,T,\delta)$
\begin{enumerate}
\item Repeat $s$ times:
\begin{enumerate}
\item Sample an edge $(v,i)$ by calling the procedure \texttt{Sample\--Edges\--Almost\--Uniformly\--in\--$G$} with $\mu=\varepsilon/c'$, where $c'$ is a sufficiently large constant.
\item If \texttt{LocalSS}$(G,v,T,\delta)$ finds a set $X$ with volume at most $k$ and conductance at most $\phi$, then abort and output \textbf{reject}.
\end{enumerate}

\item Output \textbf{accept}.
\end{enumerate}\\
\hline
\end{tabular}
\end{center}

Now we use the above algorithm to prove Theorem~\ref{thm:oneside}.
\begin{proof}[Proof of Theorem~\ref{thm:oneside}]
Let $\xi$ be $0<\xi<1$. In the algorithm \texttt{SSETester1-List}, we set $s=\Theta(1/\varepsilon)$, $T=O(\frac{\log k}{\phi^2})$, and $\delta=O(\frac{k^{-1+\xi/2}}{T})$. It is obvious that for any input graph $G$ that is a $(k,\phi)$-expander, \texttt{SSETester1-List} cannot output \textbf{reject}. Thus, we only need to consider the case that $G$ is $\varepsilon$-far from any $(k^*,\phi^*)$-expander, where $k^*=O(k^{1-\xi})$ and $\phi^*=O(\xi\phi^2)$ for any $0<\xi<1/2$. In this case, there exists disjoint subsets $S_1,\cdots,S_q\subseteq V$ with properties in Lemma~\ref{lem:partition}. Now for each $i\leq q$, by applying Lemma~\ref{lem:local} with $S=S_i,\psi=11c_2\phi^*,k=2k^*$ and $\zeta=\xi/2$, we know that there exists $\widehat{S_i}\subseteq S_i$ such that $\vol(\widehat{S_i})\geq\frac{1}{2}\vol(S_i)$, and that for each $v\in \widehat{S_i}$, the algorithm \texttt{LocalSS} with parameters $G, v, T=O(\frac{\xi\log k^*}{\xi\phi^2})=O(\frac{\log k}{\phi^2}), \delta=O(\frac{(k^*)^{-1-\xi/2}}{T})=O(\frac{k^{-1+\xi/2}}{T})$, will find a set $X$ such that $\vol(X)\leq O((k^*)^{1+\xi/2})=O(k^{1-\xi/2})<k$ and $\phi(X)\leq O(\sqrt{\psi/\xi})<\phi$ by our choice of $k^*$ and $\phi^*$. Finally, noting that $\vol(\widehat{S_1}\cup\cdots\cup \widehat{S_q})\geq\frac{1}{2}\vol(S_1\cup\cdots\cup S_q)\geq\frac{\varepsilon m}{30}$, and our sample size is $s=\Theta(1/\varepsilon)$, by the property of \texttt{Sample\--Edges\--Almost\--Uniformly\--in\--$G$} guaranteed in Lemma~\ref{lem:sample}, we can guarantee that with probability at least $2/3$, the algorithm will sample a vertex $v\in \widehat{S_1}\cup\cdots\cup \widehat{S_q}$ (and thus find a small non-expanding set) and then reject the graph $G$. Finally, note that the running time in each iteration is determined by the running time of subroutines \texttt{Sample\--Edges\--Almost\--Uniformly\--in\--$G$} and \texttt{LocalSS}. Then it is straightforward to see that the total running time of the algorithm is dominated by $\widetilde{O}(\sqrt{\frac{n}{\varepsilon^3}}+\frac{k}{\varepsilon \phi^{4}})$.
\end{proof}

\subsection{A tester in the rotation map model for general graphs}~\label{subsec:tester-map-model}
In this section, we give a tester in the rotation map model, in which we assume that the rotation map of $G$ is explicitly given, that is, when specified a vertex $v$ and an index $i$, the oracle returns a pair $(u,j)$ such that $u$ is the $i$th neighbor of $v$ and $j$ is the index of $u$ as a neighbor of $v$. We use the non-uniform replacement product with rotation map to transform $G$ into a $16$-regular graph $G'$. To perform this transformation, we also need first to turn $G$ into a graph $G_{\geq 8}$ with minimum degree $8$, and specify $\HH$ to be a proper $8$-regular Margulis expanders, and then let $G'=G_{\geq 8}^{(r)}\R\HH$. Now the tester first samples a number of vertices almost uniformly in $G'$ and then performs independent random walks on $G'$ to decide whether to accept $G$ or not, as we did before.

Our tester in rotation map model is almost the same as the two-sided tester in adjacency model in Section~\ref{subsec:two-sided-list}, and with information of the rotation map of $G$, we are actually able to give a better tester by using the spectral property of $G'$ given in Lemma~\ref{lem:separation} (see Theorem~\ref{thm:twoside-map}). However, as we mentioned before, since now we cannot specify the neighbor of a vertex $(u,i)$ in an arbitrary manner, we do not know how to emulate random walks efficiently by only performing degree and neighbor queries to $G$. That is why we introduced (neighbor,index) query and the rotation map model.

Here we emulate random walks on $G'$ by performing degree and (neighbor, index) queries to $G$: if we are currently at a vertex $(u,i)$, then with probability $1/2$, we stay at $(u,i)$; with probability $1/4$, we jump to a randomly chosen neighbor $(u,j)$ in $H_u$; with the remaining probability $1/4$, we jump to vertex $(v,j)$ such that $v$ is the $i$th neighbor of $u$ and $u$ is the $j$th neighbor of $v$ in $G$. Note that only in the last case, we need to perform (neighbor, index) queries to the oracle of $G$.
\begin{center}
\begin{tabular}{|p{0.8\textwidth}|}
\hline
\texttt{SSETester2-Map}$(G,s,r,\ell,\sigma)$
\begin{enumerate}
\item Repeat $s$ times:
\begin{enumerate}
\item Sample an edge $(v,i)$ by calling the procedure \texttt{Sample\--Edges\--Almost\--Uniformly\--in\--$G$} with $\mu = \varepsilon/c'$, where $c'$ is a sufficiently large constant.
\item Perform $r$ independent lazy random walks in $G_{\geq 8}^{(r)}\R H$ of length $\ell$ starting from $v$ by using rotation map of $G$.
\item Let $Z_v$ be the number of pairwise collisions among the endpoints of these $r$ random walks.
\item If $Z_v > \sigma$ then abort and output \textbf{reject}.
\end{enumerate}
\item Output \textbf{accept}.
\end{enumerate}
\\
\hline
\end{tabular}
\end{center}

Now we can use the above algorithm to prove Theorem~\ref{thm:twoside-map}.
\begin{proof}[Proof of Theorem~\ref{thm:twoside-map}]
We set $s=\Theta(1/\varepsilon)$, $r=\Theta(s\sqrt{m})$, $\ell=\Theta(\frac{\min\{\log n,\log(4m/k)\}\cdot(\ln k)}{\phi^2})$ and $\sigma=\binom{r}{2}\frac{6c}{k\varepsilon}$ in the algorithm \texttt{SSETester2-Map}. Let $k^*=\Theta(k\varepsilon)$ and $\phi^*=\Theta(\frac{\phi^2}{\min\{\log n,\log(4m/k)\}\cdot(\ln k)})$.

The proof is straightforward given the proof of Theorem~\ref{thm:bounded}, Theorem~\ref{thm:twoside-list} and our description of the implementation of \texttt{SSETester2-Map}.
A key difference is that now we directly use the spectral property of $G'$ that $\eta_{4m_{G'}/k}(W_{G'})\leq 1-\Omega(\frac{\phi^2}{\min\{\log(4m/k),\log n\}})$ to derive an upper bound for the collision probability, instead of using the combinatorial property that $\phi_{G'}(k)=\Omega(\phi^2)$ (which in turn gives that $\eta_{4m_{G'}/k}(W_{G'})\leq 1-\Omega(\frac{\phi^4}{\min\{\log(4m/k),\log n\}})$) as we did in the proof of Theorem~\ref{thm:twoside-list}. Here, the spectral property of $G'$ follows by its definition and Lemma~\ref{lem:separation}. Therefore, there is no quadratic loss of $\phi$ as we had in Theorem~\ref{thm:twoside-list}. The rest of the proof follows by analogous arguments in the proof of Theorem~\ref{thm:twoside-list}.
\end{proof}

\textbf{Remark.} We can use the ``Furthermore'' part of Lemma~\ref{lem:separation} to give stronger upper bound on the collision probability (of random walk distributions from each vertex) for $\phi$-expanders, which combined with the lower bound of collision probability given in~\cite{KS11:tester} can also be used to design conductance tester in rotation map model with the same running time and approximation guarantee as in~\cite{LPP11:conductance}. We omit the details here.


\section{Conclusions}\label{sec:conclusion}
We give property testers for small set expansion in general graphs, including a two-sided error tester and a one-sided error tester in adjacency list model, and a two-sided error tester in rotation map model in which the algorithm can perform (neighbor, index) queries as well as degree queries. Our analysis for two-sided error testers uses a non-uniform replacement product to transform an arbitrary graph into a bounded degree graph that well preserves expansion profile.

It is unclear if the rotation map model is strictly stronger than the adjacency list model. In particular, we do not know if the newly introduced (neighbor, index) query is necessary for us to obtain a tester with at most quadratic loss in the conductance parameter. It will be interesting to give a two-sided error tester in the adjacency list model that distinguishes $(k,\phi)$-expanders from graphs that are $\varepsilon$-far from any $(\Theta(k\varepsilon),\widetilde{\Theta}(\phi^2))$-expander, as we obtained in the rotation map model. It is also left open if the query complexity and/or running time of the two-sided testers could be improved to $\widetilde{O}(\sqrt{n}(\phi^{-1}\varepsilon^{-1})^{O(1)})$, without dependency on the number of edges $m$.

\subparagraph*{Acknowledgements}
We would like to thank anonymous referees of RANDOM 2014 for their very detailed and helpful comments to an earlier version of this paper.

\appendix






\bibliographystyle{alphabetic}
\bibliography{testssexpander}

\newcommand{\etalchar}[1]{$^{#1}$}
\makeatletter
\@ifundefined{beginbibabs}{\def\beginbibabs{\begin{quotation}\noindent}
\def\endbibabs{\end{quotation}}}{}
\makeatother

\begin{thebibliography}{BEKKR08}

\bibitem[Alo86]{Alo86:eigenvalues}
N.~Alon.
\newblock Eigenvalues and expanders.
\newblock {\em Combinatorica}, 6(2):83--96, 1986.

\bibitem[AFNS09]{AFNS09:test}
N.~Alon, E.~Fischer, I.~Newman, and A.~Shapira.
\newblock A combinatorial characterization of the testable graph properties:
  it's all about regularity.
\newblock {\em SIAM Journal on Computing}, 39(1):143--167, 2009.

\bibitem[AM85]{AM85:lambda}
N.~Alon and V.~Milman.
\newblock $\lambda_1$, isoperimetric inequalities for graphs, and
  superconcentrators.
\newblock {\em Journal of Combinatorial Theory, Series B}, 38(1):73--88, 1985.

\bibitem[ASS08]{ASS08:expander}
N.~Alon, O.~Schwartz, and A.~Shapira.
\newblock An elementary construction of constant-degree expanders.
\newblock {\em Comb. Probab. Comput.}, 17(3):319--327, May 2008.

\bibitem[ACL06]{ACL06:localpr}
R.~Andersen, F.~Chung, and K.~Lang.
\newblock Local graph partitioning using pagerank vectors.
\newblock In {\em Symposium on Foundations of Computer Science}, 2006.

\bibitem[AP09]{AP09:evolvingset}
R.~Andersen and Y.~Peres.
\newblock Finding sparse cuts locally using evolving sets.
\newblock In {\em Symposium on Theory of Computing}, STOC '09, 2009.

\bibitem[ABS10]{ABS10:subexp}
S.~Arora, B.~Barak, and D.~Steurer.
\newblock Subexponential algorithms for unique games and related problems.
\newblock In {\em Foundations of Computer Science (FOCS)}, pages 563--572,
  2010.

\bibitem[BGH{\etalchar{+}}12]{BGHMRS12:longcode}
B.~Barak, P.~Gopalan, J.~Hastad, R.~Meka, P.~Raghavendra, and D.~Steurer.
\newblock Making the long code shorter.
\newblock In {\em Foundations of Computer Science (FOCS), 2012 IEEE 53rd Annual
  Symposium on}, pages 370--379. IEEE, 2012.

\bibitem[BEKKR08]{BKKR08:query}
I.~Ben-Eliezer, T.~Kaufman, M.~Krivelevich, and D.~Ron.
\newblock Comparing the strength of query types in property testing: the case
  of testing k-colorability.
\newblock In {\em Proceedings of the nineteenth annual ACM-SIAM symposium on
  Discrete algorithms}, pages 1213--1222, 2008.

\bibitem[CMOS11]{CMOS11:planar}
A.~Czumaj, M.~Monemizadeh, K.~Onak, and C.~Sohler.
\newblock Planar graphs: Random walks and bipartiteness testing.
\newblock In {\em Foundations of Computer Science (FOCS), 2011 IEEE 52nd Annual
  Symposium on}, pages 423--432. IEEE, 2011.

\bibitem[CS10]{CS10:expansion}
A.~Czumaj and C.~Sohler.
\newblock Testing expansion in bounded-degree graphs.
\newblock {\em Combinatorics, Probability and Computing}, 19(5-6):693--709,
  2010.

\bibitem[Fei06]{Fei06:sum}
U.~Feige.
\newblock On sums of independent random variables with unbounded variance and
  estimating the average degree in a graph.
\newblock {\em SIAM Journal on Computing}, 35(4):964--984, 2006.

\bibitem[GG81]{GG81:explicit}
O.~Gabber and Z.~Galil.
\newblock Explicit constructions of linear-sized superconcentrators.
\newblock {\em Journal of Computer and System Sciences}, 22(3):407--420, 1981.

\bibitem[Gol10]{Gol10:test}
O.~Goldreich.
\newblock {\em Property testing: current research and surveys}, volume 6390.
\newblock Springer-Verlag New York Inc, 2010.

\bibitem[GR99]{GR99:bipartite}
O.~Goldreich and D.~Ron.
\newblock A sublinear bipartiteness tester for bounded degree graphs.
\newblock {\em Combinatorica}, 19(3):335--373, 1999.

\bibitem[GR00]{GR00:expansion}
O.~Goldreich and D.~Ron.
\newblock On testing expansion in bounded-degree graphs.
\newblock {\em Electronic Colloquium on Computational Complexity {(ECCC)}},
  7(20), 2000.

\bibitem[GR02]{GR02:testing}
O.~Goldreich and D.~Ron.
\newblock Property testing in bounded degree graphs.
\newblock {\em Algorithmica}, 32(2):302--343, 2002.

\bibitem[GR08]{GR08:average}
O.~Goldreich and D.~Ron.
\newblock Approximating average parameters of graphs.
\newblock {\em Random Structures \& Algorithms}, 32(4):473--493, 2008.

\bibitem[HLW06]{HLW06:expander}
S.~Hoory, N.~Linial, and A.~Wigderson.
\newblock Expander graphs and their applications.
\newblock {\em Bulletin of the American Mathematical Society}, 43(4):439--561,
  2006.

\bibitem[KS11]{KS11:tester}
S.~Kale and C.~Seshadhri.
\newblock An expansion tester for bounded degree graphs.
\newblock {\em SIAM J. Comput.}, 40(3):709--720, 2011.

\bibitem[KKR04]{KKR04:bipartite}
T.~Kaufman, M.~Krivelevich, and D.~Ron.
\newblock Tight bounds for testing bipartiteness in general graphs.
\newblock {\em SIAM Journal on computing}, 33(6):1441--1483, 2004.

\bibitem[KL12]{KL12:sparse}
T.~C. Kwok and L.~C. Lau.
\newblock Finding small sparse cuts by random walk.
\newblock In {\em APPROX-RANDOM}, pages 615--626, 2012.

\bibitem[LOT12]{LOT12:spectral}
J.R. Lee, S.~O{veis Gharan}, and L.~Trevisan.
\newblock Multi-way spectral partitioning and higher-order cheeger
  inequalities.
\newblock {\em ACM symposium on Theory of computing}, 2012.

\bibitem[LPP11]{LPP11:conductance}
A.~Li, Y.~Pan, and P.~Peng.
\newblock Testing conductance in general graphs.
\newblock In {\em Electronic Colloquium on Computational Complexity (ECCC)},
  volume~18, page 101, 2011.

\bibitem[LRTV12]{LRTV12:sparse}
A.~Louis, P.~Raghavendra, P.~Tetali, and S.~Vempala.
\newblock Many sparse cuts via higher eigenvalues.
\newblock {\em ACM symposium on Theory of computing}, 2012.

\bibitem[LK99]{LK99:mixing}
L.~Lov{\'a}sz and R.~Kannan.
\newblock Faster mixing via average conductance.
\newblock In {\em ACM symposium on Theory of computing}, pages 282--287, 1999.

\bibitem[Mar73]{Mar73:construction}
G.~A. Margulis.
\newblock Explicit constructions of expanders.
\newblock {\em Problemy Peredachi Informatsii}, 9:71--80 (in Russian), 1973.

\bibitem[NS10]{NS10:expansion}
A.~Nachmias and A.~Shapira.
\newblock Testing the expansion of a graph.
\newblock {\em Information and Computation}, 208(4):309--314, 2010.

\bibitem[OW12]{OW12:sse}
R.~O'Donnell and D.~Witmer.
\newblock Improved small-set expansion from higher eigenvalues.
\newblock {\em Arxiv preprint arXiv:1204.4688}, 2012.

\bibitem[OT12]{OT12:clustering}
S.~O{veis Gharan} and L.~Trevisan.
\newblock Approximating the expansion profile and almost optimal local graph
  clustering.
\newblock In {\em Foundations of Computer Science}, 2012.

\bibitem[PR02]{PR02:diameter}
M.~Parnas and D.~Ron.
\newblock Testing the diameter of graphs.
\newblock {\em Random Structures \& Algorithms}, 20(2):165--183, 2002.

\bibitem[RS10]{RS10:expansion}
P.~Raghavendra and D.~Steurer.
\newblock Graph expansion and the unique games conjecture.
\newblock In {\em ACM symposium on Theory of computing}, pages 755--764, 2010.

\bibitem[Rei08]{Rei08:connectivity}
O.~Reingold.
\newblock Undirected connectivity in log-space.
\newblock {\em Journal of the ACM}, 55:17:1--17:23, 2008.

\bibitem[RTV06]{RTV06:pseudorandom}
O.~Reingold, L.~Trevisan, and S.~Vadhan.
\newblock Pseudorandom walks on regular digraphs and the rl vs. l problem.
\newblock In {\em Proceedings of the thirty-eighth annual ACM symposium on
  Theory of computing}, pages 457--466. ACM, 2006.

\bibitem[RVW02]{RVW02:zigzag}
O.~Reingold, S.~Vadhan, and A.~Wigderson.
\newblock Entropy waves, the zig-zag graph product, and new constant-degree
  expanders.
\newblock {\em Annals of Mathematics}, 155:157--187, 2002.

\bibitem[Ron10]{Ron10:testing}
D.~Ron.
\newblock Algorithmic and analysis techniques in property testing.
\newblock {\em Foundations and Trends in Theoretical Computer Science},
  5(2):73--205, 2010.

\bibitem[RV05]{RV05:derandomized}
E.~Rozenman and S.~Vadhan.
\newblock Derandomized squaring of graphs.
\newblock In {\em Approximation, Randomization and Combinatorial Optimization.
  Algorithms and Techniques}, pages 436--447. Springer, 2005.

\bibitem[SJ89]{SJ89:approximate}
A.~Sinclair and M.~Jerrum.
\newblock Approximate counting, uniform generation and rapidly mixing markov
  chains.
\newblock {\em Information and Computation}, 82(1):93--133, 1989.

\bibitem[ST08]{ST08:local}
D.A. Spielman and S.H. Teng.
\newblock A local clustering algorithm for massive graphs and its application
  to nearly-linear time graph partitioning.
\newblock {\em arXiv:0809.3232}, 2008.

\bibitem[Ste10]{Ste10:phd}
D.~Steurer.
\newblock On the complexity of unique games and graph expansion.
\newblock {\em PhD diss., Princeton University}, 2010.

\bibitem[Tao12]{Tao12:randommatrix}
T.~Tao.
\newblock {\em Topics in random matrix theory}, volume 132.
\newblock American Mathematical Soc., 2012.

\end{thebibliography}

\end{document}